\documentclass[pdflatex,sn-mathphys-num]{sn-jnl}% Math and Physical Sciences Numbered Reference Style
\usepackage{graphicx}%
\usepackage{multirow}%
\usepackage{amsmath,amssymb,amsfonts}%
\usepackage{amsthm}%
\usepackage{mathrsfs}%
\usepackage[title]{appendix}%
\usepackage{xcolor}%
\usepackage{textcomp}%
\usepackage{manyfoot}%
\usepackage{booktabs}%
\usepackage{algorithm}%
\usepackage{algorithmicx}%
\usepackage{algpseudocode}%
\usepackage{listings}%
\usepackage{xspace}
\usepackage{pgfplots}
\usepackage{comment}
\usepackage{adjustbox}
\usepackage{tabularx} % put this in your preamble
\usepackage{enumitem}
\usepackage{subcaption}
\definecolor{rootcol}{RGB}{0, 0, 0}
\definecolor{l1col}  {RGB}{55, 126, 184}
\definecolor{l2col}  {RGB}{228, 26, 28}
\definecolor{l3col}  {RGB}{77, 175, 74}
\definecolor{l4col}  {RGB}{152, 78, 163}

\usepackage{cleveref} 
\usepackage[mode=buildnew]{standalone}

\newcommand{\ctoprule}[1]{%
  \cmidrule[\heavyrulewidth]{#1}%
}

\newcommand{\cbottomrule}[1]{%
  \cmidrule[\heavyrulewidth]{#1}%
}

\newcommand\skipvs{skip values\xspace}
\newcommand\identical{interchangeable\xspace}

\newcommand{\rank}{\mbox{\rm\texttt{rank}}}
\newcommand{\select}{\mbox{\rm\texttt{select}}}
\newcommand{\findc}{\mbox{\rm\texttt{find\_close}}}
\newcommand{\findo}{\mbox{\rm\texttt{find\_open}}}

\newcommand{\opp}{\mbox{\rm\texttt{(}}}
\newcommand{\cpp}{\mbox{\rm\texttt{)}}}
\newcommand{\BP}{B}

\newcommand{\CBP}{B_c} 
\newcommand{\Pru}{S_c}
\newcommand{\Tar}{R_c} % reference
\newcommand{\Len}{L_c}

\newcommand{\ignore}[1]{}

\newcommand{\Oh}{\mathcal O}

\theoremstyle{thmstyleone}%
\theoremstyle{thmstyletwo}%

\theoremstyle{thmstylethree}%
\newtheorem{definition}{Definition}%
\newtheorem{lemma}{Lemma} 

\begin{document}

\title[Extended Depth-First Representations of $k^2$-trees]{Extended Depth-First Representations of $k^2$-trees}

%%=============================================================%%
%% GivenName	-> \fnm{Joergen W.}
%% Particle	-> \spfx{van der} -> surname prefix
%% FamilyName	-> \sur{Ploeg}
%% Suffix	-> \sfx{IV}
%% \author*[1,2]{\fnm{Joergen W.} \spfx{van der} \sur{Ploeg} 
%%  \sfx{IV}}\email{iauthor@gmail.com}
%%=============================================================%%

%% --- Authors ---

\author*[1]{\fnm{Gabriel} \sur{Carmona}}\email{gabriel.carmona@phd.unipi.it}

\author[1,2]{\fnm{Paolo} \sur{Ferragina}}\email{Paolo.Ferragina@santannapisa.it}

\author[1]{\fnm{Giovanni} \sur{Manzini}}\email{giovanni.manzini@unipi.it}

\author[2]{\fnm{Francesco} \sur{Tosoni}}\email{Francesco.Tosoni@santannapisa.it}

%% --- Affiliations ---

\affil[1]{\orgdiv{Department of Computer Science}, \orgname{University of Pisa}, \orgaddress{\city{Pisa}, \country{Italy}}}

\affil[2]{\orgdiv{L'EMbeDS Department}, \orgname{Sant'Anna School of Advanced Studies}, \orgaddress{\city{Pisa}, \country{Italy}}}

%%==================================%%
%% Sample for unstructured abstract %%
%%==================================%%

% Purpose - Methods - Results - Conclusion

\abstract{In this paper, we study static, computation-friendly, lossless compression formats for graphs, focusing on memory locality and operational efficiency of $k^2$-trees. We observe that their traditional level-wise layouts suffer from poor cache performance due to weak locality, especially in operations such as matrix-vector and matrix-matrix operations. To address this limitation, we propose four depth-first representations of $k^2$-trees: a plain depth-first layout (EDF-1), a balanced-parenthesis representation (BP), and their compressed variants (CEDF and CBP). We further introduce a linear-time compression method based on suffix and LCP arrays to identify and compress identical subtrees.

We experimentally evaluate the execution time, the disk space, and the peak-memory usage of our approaches against classical level-wise $k^2$-trees and DFUDS-based representations across two real and one synthetic dataset (i.e., Web Graphs, Wikidata, and random adjacency matrices) over the above linear-algebra operations. Results show that our depth-first layouts are competitive and often superior than known approaches: CEDF achieves the best compression in most settings, EDF-1 and CEDF reduce the peak memory usage consistently, and performance varies by workload, with different layouts excelling in different operations and data regimes.

Overall, this work demonstrates that depth-first layouts of $k^2$-trees provide a practical and efficient alternative to traditional layouts, improving both compression and computational performance in matrix operations.
}

\keywords{$k^2$-trees, Depth-First traversal, Succinct data structures, Tree compression, Cache efficiency}

\maketitle

% ------------------------------------------
\section{Introduction}
% ------------------------------------------

Graphs are a fundamental abstraction in modern data pipelines, used to model, analyse, and query phenomena across diverse domains~\cite{big-graph-cacm2021}. They provide intuitive, domain-oriented representations, exemplified by Knowledge Graphs (KGs)~\cite{google_kg,amazon_kg,orkg,dblp_lessons}. Prominent instances include Google's Knowledge Graph~\cite{google_kg}, Amazon's Product Graph~\cite{amazon_kg}, the Open Research Knowledge Graph (ORKG)~\cite{orkg}, Wikidata~\cite{wikidata-survey,wikidata-making} (via Wikimedia Enterprise~\cite{wikimedia_enterprise_wikidata_api_2026}), and DBLP~\cite{dblp_lessons,kg_role_dagstuhl}, to mention a few. Consequently, graph technologies have become more than ever before a cornerstone of data innovation, particularly through Graph Databases (GDBs)~\cite{AnglesG08,LiangMTZ24}. The global GDB market, valued at USD 2.85 billion in 2025, is projected to grow to USD 20.29 billion by 2034\footnote{\url{https://www.fortunebusinessinsights.com/graph-database-market-105916}}. The ISO standardisation of the GQL query language~\cite{GQL} further signals their growing significance. Additionally, Retrieval-Augmented Generation (RAG) models, which structure knowledge as KGs to overcome Large Language Model (LLM) limitations~\cite{graph_rag}, have accelerated this trend. 

Motivated by the ubiquity of massive graphs and their applications~\cite{big-graph-cacm2021}, this paper investigates static, computation-friendly, losslessly compressed graph formats. Existing literature includes WebGraph~\cite{webgraph_java,webgraph_rust}, Zuckerli~\cite{zuckerli}, mm-repair~\cite{improving_gramcompr_mat_pvldb}, and the $k^2$-tree~\cite{k2-tree_udc,brisaboa2009k2,navarro_book}. These compressed formats reduce storage requirements while supporting operations directly over the compressed data, ranging from neighbour retrieval to algebraic operations over the graph adjacency matrix, such as matrix-vector~\cite{j4,Francisco2022} or matrix-matrix~\cite{k2-tree_vldbj} multiplication. Heightened interest in graph processing, driven by AI inference advances, has spurred research also to other compressed formats, such as Compressed Linear Algebra (CLA)~\cite{cla_vldbj,cla_cacm} and beyond~\cite{gram_la_neurips2020,accelerating_gnn2025,k2-tree_vldbj,aware,uni_lcompr_sparse,improving_gramcompr_mat_pvldb,efficient_compact_via_entropy}. These methods typically treat matrix rows as sequences, often employing reordering strategies to enhance compression over non-binary matrices.

Among all those approaches, $k^2$-trees~\cite{k2-tree_udc,brisaboa2009k2}\cite[\S9.2.1]{navarro_book} are particularly interesting when the matrices to compress are binary, sparse, and with some clustering features of the input data. A $k^2$-tree is built on a binary matrix whose size $n$ is a power of $k$, possibly padded with zeros. It is a $k^2$-ary tree, in which the root represents the entire $n \times n$ matrix, and its children partition it into $n^2/k^2$ square submatrices. A bit `0' denotes a submatrix full of 0s, and it becomes a leaf of the $k^2$-tree; while a bit `1' denotes a non-null submatrix, whose node is further decomposed. This yields a tree of height $1 + \log_k n$. In this paper, we assume $k=2$, as is common in literature and available implementations.

\begin{figure}[tbp]
  \centering
  \includegraphics[width=.4\textwidth]{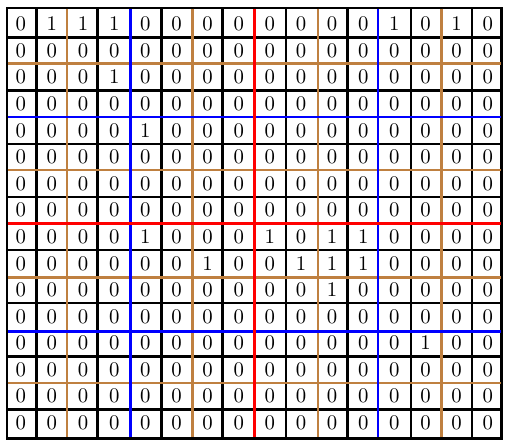}
  \includegraphics[width=\textwidth]{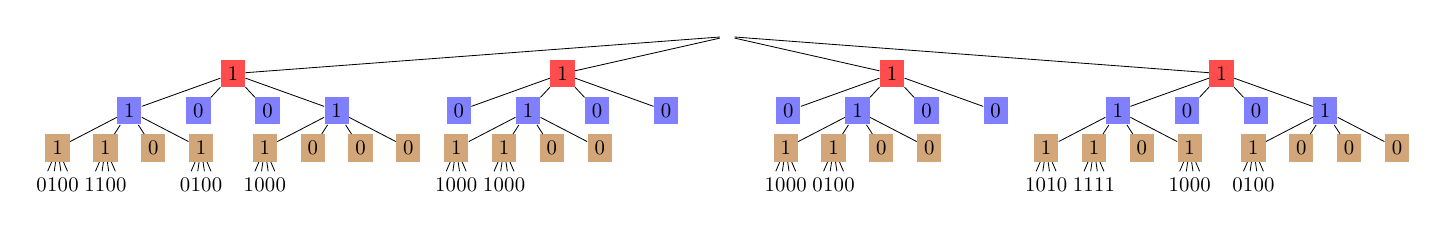}
  \vspace{-.1cm}
    \caption{A sparse binary matrix and its associated $k^2$-tree ($k=2$).}
    \label{fig:matrix-example}
\end{figure}

\Cref{fig:matrix-example} illustrates a $16 \times 16$ matrix partitioned into $2 \times 2$ submatrices (top) and its corresponding $k^2$-tree (bottom). Submatrix dimensions are colour-coded: higher levels represent larger submatrices. Although depicted with pointers, the $k^2$-tree is pointerless; navigation relies on traversing the bit vector encoding internal nodes in row-major order.

The $k^2$-tree efficiently supports operations on its compressed bit vector, such as retrieving a matrix element, a column, or a row, and performing other matrix operations, such as logical operations, addition, and multiplication, directly on their compressed form~\cite {k2-tree_brisaboa, k2-tree_vldbj,6824442}. These operations are essential to support more sophisticated queries over the corresponding input graph, such as Regular Path Queries (RPQs), which extend the expressiveness of query languages like Cypher~\cite{cypher}, GQL~\cite{GQL}, PGQL~\cite{PGQL}, TigerGraph~\cite{tiger_graph}, and SPARQL 1.1~\cite{modern_query_lang_graph_db}. Notably, 24\% of the 208 million queries in the Wikidata Query Service logs involve at least one RPQ feature~\cite{wd_query_log,the_most_out_wd}. Additionally, PageRank kernels~\cite{j4,Francisco2022} are efficiently implemented via $k^2$-trees. These features made the $k^2$-tree a very well-known approach to the implementation of the adjacency matrix of graphs, such as Web Graphs~\cite{k2-tree_udc,claude2010fast}, Graph DBs~\cite{k2-tree_vldbj,6824442}, Geographic Information Systems~\cite{DEBERNARDO202386,k2-raster}, and Social Networks~\cite{DEBERNARDO202386}. 

%Whereas two-dimensional Block Trees~\cite{2DBT} have extended their structure to discrete (i.e., non-binary), repetitive data.

However, the traditional level-by-level layout of $k^2$-trees does not make good use of locality: nodes that are close in the tree may be laid out far apart in memory. For operations that need to inspect most of the tree (such as matrix multiplication, matrix addition, or matrix-vector multiplication), this lack of locality often leads to many cache misses. A natural solution to address this problem is to use a depth-first layout, which places the nodes of the same subtree close together in memory. Following this idea, we propose two depth-first representations of $k^2$-trees: one based on reordering the level-wise binary sequence in depth-first order, and one based on the balanced parenthesis encoding of trees. Our representations have another advantage: their depth-first layout makes it easier to detect identical subtrees, which can then be identified and compressed, thus originating two other (compressed) representations. 

As a final result, our paper introduces four new depth-first representations of $k^2$-trees: the plain depth-first layout (EDF-1), the balanced-parenthesis layout (BP), and their compressed versions (CEDF and CBP). We also present a linear-time method, based on Suffix and LCP arrays, to detect and compress repeated subtrees. At the same time as the conference version of this paper \cite{conf_version}, Fariña et al. \cite{cache-friendly-boolean} also
introduced a cache-friendly $k^2$-tree layout inspired by the DFUDS representation, which is also depth-first but different from ours.

We experimentally evaluate our proposals against two established baselines: the classical level-wise $k^2$-tree~\cite{k2-tree_vldbj} and the DFUDS-based $k^2$-tree of Fariña et al.~\cite{cache-friendly-boolean}. The experiments cover three families of binary matrices of different sizes and sparsity levels, either derived from known large graphs (such as Web Graph~\cite{datasets1,datasets2}, Wikidata database~\cite{k2-tree_vldbj}), or generated randomly~\cite {k2-tree_vldbj}. We measure disk space usage, execution time, and peak memory during matrix-vector multiplication, matrix-matrix addition, and matrix-matrix multiplication.

Overall, our depth-first representations, both plain and compressed, result competitive and often advantageous. The CEDF layout achieves the best disk space usage in most datasets, while EDF-1 and CEDF consistently use the least memory. Performance varies by operation and dataset: EDF-1 is generally the fastest for matrix-vector multiplication, the DFUDS-based layout performs best for matrix-matrix addition, and no single method dominates in matrix-matrix multiplication. EDF-1 and CEDF tend to perform better on denser matrices or when many subtree traversals occur, whereas the classical level-wise layout excels on very sparse matrices. CBP representation exploits repeated subtrees in some WebGraph datasets, but it is generally slower and, in the case of BP, requires more space than all other solutions.

As a final note, this paper extends the conference version that appeared in~\cite{conf_version} by providing an expanded introduction, new sections, a theoretical analysis of the space and time complexity of our four new depth-first representations (summarized in Lemma~\ref{lemma:bfs},~\ref{lemma:edf},~\ref{lemma:bp},~\ref{lemma:dfuds},~\ref{lemma:cbp}, and~\ref{lemma:cedf}), additional experimental comparisons (matrix-vector multiplication and matrix-matrix addition), and improved implementations and thus final results.

The remainder of this paper is organised as follows.~\Cref{sec:notation} introduces notation and terminology.~\Cref{sec:canonical} reviews canonical $k^2$-trees.~\Cref{sec:df-traversal} describes our new and known depth-first layouts (Plain, Enriched, Balanced Parenthesis, and DFUDS).~\Cref{sec:compression} presents our subtree-compression techniques.~\Cref{sec:experiments} reports experimental results. Finally,~\Cref{sec:conclusion} concludes the paper.

% ------------------------------------------
\section{Notation and terminology}
\label{sec:notation}
% ------------------------------------------

Let $\Sigma$ be a finite ordered alphabet of constant size $\sigma$. 
A {\em string} (or sequence or array) of length $n$ over an alphabet $\Sigma$ is denoted with $S[1,n] \in \Sigma^n$. We write $S[i..j]$ to denote the substring $S[i]S[i+1]\cdots S[j]$, if $1\leq i \leq j \leq n$, or the empty string otherwise. The Suffix Array~\cite{MM93} of $S$ is a permutation of the integers $\{1,\ldots,n\}$ such that for $i=2,\ldots,n$, $S[SA[i-1],n] \prec S[SA[i],n]$, where $\prec$ denotes the lexicographic ordering. The LCP Array~\cite{MM93}\cite[\S10.2.2]{pearls_ferragina} of string $S$ is an array of integers such that, for  $i=2,\ldots,n$, $LCP[i]$ is the length of the longest common prefix between $S[SA[i-1],n]$ and $S[SA[i],n]$. Both $SA$ and $LCP$ can be computed in optimal $O(n)$ time~\cite{KSB06,karkkainen2009permuted,KoAlu03,KSPP03}.

Given a string $S$ and a symbol $c\in\Sigma$,  $\rank_c(S,i)$ returns the number of occurrences of $c$ in $S[1,i]$, and $\select_c(S,i)$ returns the position of the $i$-th occurrence of $c$ in $S$ or $-1$ if such occurrence does not exist. Using additional $o(n)$ bits of space we can preprocess $S$ so that both $\rank$ and $\select$ operations can be computed in $O(1)$ time~\cite[\S4.3]{navarro_book}\cite{Munro96}. If $\Sigma = \{ \opp, \cpp \} $, the sequence $S$ is called a {\em balanced parenthesis} sequence~\cite[\S8.2]{navarro_book} if $S$ contains the same number of \opp\ and  \cpp\ symbols and no prefix $S[1,i]$ contains more $\cpp$ than $\opp$ then $S$. Given a balanced parenthesis sequence $S$, $\findc(S, i)$ (resp. $\findo(S, i)$) returns the index of the closing (opening) parenthesis that matches a given opening (closing) parenthesis $S[i]$. Such operations can be supported in $O(1)$ time~\cite{Munro_Raman_2002}.

% ------------------------------------------
\section{The canonical representation of \texorpdfstring{$k^2$}~-trees}
\label{sec:canonical}
% ------------------------------------------

Given a square binary matrix $M$, the corresponding $k^2$-tree~\cite{brisaboa2009k2} is built by recursively dividing $M$ using an \textit{MX-Quadtree} strategy~\cite{samet2006foundations}, splitting it into $k^2$ equal-sized submatrices, each containing $n^2/k^2$ cells. Each submatrix corresponds to a child of the root node, with its value set to $1$ if at least one cell in the submatrix is $1$; otherwise, its value is $0$. The process is recursively applied to all submatrices with at least one $1$ until all cells have been processed, as illustrated in~\Cref{fig:matrix-example}. 
This approach works seamlessly when $n$ is a power of $k$. In cases where $n$ is not a power of $k$, the matrix can be padded with additional rows and columns of zeros until its size becomes a power of $k$. In what follows, we assume $k$ is a constant, so the size of the padded matrix remains $O(n)$.

By construction, the $k^2$-tree is structured as a $k^2$-ary tree of height $h=\lceil \log_k n\rceil$, where each node, except the root, stores a single bit of information. For all levels $\ell<h$, internal nodes store a $1$ while leaf nodes store a $0$, meaning that the corresponding $k^{h-\ell}\times k^{h-\ell}$ submatrix only contains zero elements. 
At level $h$, all nodes are leaves, each storing a bit value from the input matrix. 
In~\cite{brisaboa2009k2}, the authors propose a representation of the above tree (and thus of the associated Boolean matrix) consisting of two bit vectors: $T$ ({\em tree}) stores all the bits that are not in the last level of the tree following a top-down level-wise traversal. In contrast, $L$ ({\em leaves}) stores all the bits of the last level of the tree in left-to-right order. In the following, we refer to the above as the {\em canonical representation} for the $k^2$-tree. For instance, for the $k^2$-tree in~\Cref{fig:matrix-example}, the corresponding $T$ and $L$ bit vectors are as follows (tree levels are colour-coded following~\Cref{fig:matrix-example}):
\begin{itemize}[leftmargin=20mm]
    \item[\small{$T$:}] \small{\texttt{{\color{red}1111} {\color{blue}1001 0100 0100 1001} {\color{brown}1101 1000 1100 1100 1101 1000}}}
    \item[\small{$L$:}] \small{\texttt{0100 1100 0100 1000 1000 1000 1000 0100 1010 1111 1000 0100}}
\end{itemize}
There is a reason to keep the bit vectors $T$ and $L$ separately. To access specific elements of the input matrix, it is necessary to navigate the $k^2$-tree efficiently. The authors observed that we can compute the $i$-th child of a node at position $x$ using the formula $\text{\texttt{child}}(x, i) = \rank_1(x)\cdot k^2+i$~\cite{brisaboa2009k2}. Hence, by adding an $o(|T|)$-bit data structure that supports the \texttt{rank$_1$} operation over $T$ in constant time, it is possible to navigate the tree efficiently. 
Later, Brisaboa et al.~\cite{k2-tree_udc} extended the functionalities of the $k^2$-tree with operations still relying only on the \rank\ operation on~$T$. Conversely, there is no need to enrich $L$ with additional data structures, so it is stored as a plain bitvector.

For an $n\times n$ matrix with $m$ nonzeros,~\cite{brisaboa2009k2} showed that the number of bits in this representation is bounded by $k^2m\left(\log_{k^2} ({n^2}/{m}) \, +\, O(1)\right)$. Yet, this value is obtained only for pathological inputs; in practice, experiments over web graphs~\cite{k2-tree_udc} showed that the actual space usage is significantly lower and competitive with other graph compression schemes.

We summarise the space and traversal of a subtree time complexity of the canonical $k^2$-tree with Lemma~\ref{lemma:bfs}.

\begin{lemma}\label{lemma:bfs}
    A  $k^2$-tree that has $t$ nodes at levels $0,\dots,h-1$ and $\ell$ nodes at level $h$ can be represented using the Enriched Depth-First format in $t+\ell+ o((t+\ell))$ bits and supports the traversal of a subtree in constant time.
\end{lemma}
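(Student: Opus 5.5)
The plan is to build the Enriched Depth-First (EDF) format concretely and check the two claims of Lemma~\ref{lemma:bfs} against it. The layout concatenates the node bits of the $k^2$-tree in preorder (depth-first, children left to right), omitting the root, into a single bitvector $D$ of length $t+\ell$. At a node at level $<h$, bit $1$ means internal and $0$ means leaf. At level $h$ the bit is the matrix cell. Because $k$ is a constant, every internal node at level $h-1$ is followed immediately by its $k^2$ last-level bits, so a subtree rooted at any node occupies a contiguous interval of $D$. Navigation works as follows:
\begin{itemize}
\item The first child of an internal node at position $x$ is $x+1$.
\item Its $i$-th child is reached by repeatedly skipping the subtrees of the preceding siblings.
\end{itemize}
``Traversal of a subtree in constant time'' is thus read as: given the start of a subtree, compute its end (equivalently, jump to the next sibling) in $O(1)$ time. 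This lets any descent or any full depth-first scan advance in $O(1)$ per step.

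To compute subtree ends, I reduce the problem to a forward search on an excess sequence. Give each position a weight:
\begin{itemize}
\item $+(k^2-1)$ to each internal node;
\item $-1$ to each leaf, including every level-$h$ node regardless of its bit.
\end{itemize}
A standard counting argument for full $k^2$-ary trees shows that the subtree starting at $x$ ends at the first $y\ge x$ where the prefix sum of weights over $[x,y]$ equals $-1$. This is the exact analogue of $\findc$ for balanced parentheses. The weights are not a function of the bit alone, because a $1$ at level $h$ is a leaf. To fix this, I treat each group of $k^2$ consecutive last-level bits as a single block: in the weight sequence it contributes $k^2$ leaves, i.e.\ total $-k^2$. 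A node at level $h-1$ with bit $1$ is then interpreted as ``$+(k^2-1)$ followed by a constant-length block of $-1$'s''. The traversal always knows the current depth, and a small directory can flag, for each position, whether it starts a last-level group. Both depend only on $O(1)$ context, so they can be folded into the lookup tables.

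For the space and time bounds, I would invoke the range min-max tree machinery (the constant-time $\findc$ result of~\cite{Munro_Raman_2002} and its generalisation in~\cite[\S7--8]{navarro_book}). Built over $D$ viewed as a sequence with constant-size step alphabet, it answers forward-search queries on the excess in $O(1)$ time using $o(t+\ell)$ extra bits. Adding $\rank/\select$ directories of $o(t+\ell)$ bits~\cite{Munro96} gives the total of $t+\ell+o(t+\ell)$ bits.

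The main obstacle is this last step. The classical constant-time results are stated for $\pm1$ steps, whereas here the steps are $+(k^2-1)$ and $-1$, and the weight of a bit depends on whether it lies in a last-level group. I would first try to expand each internal node virtually into $k^2-1$ opening symbols, which turns $D$ into a bona fide balanced-parenthesis-like sequence of length $O(t+\ell)$. I would then argue that the small-block lookup tables and the rmM-tree summaries can be computed directly on $D$, since $k$ is constant and each block expands by at most a constant factor. If that fails, the fallback is to store explicit skip values only at nodes whose subtrees exceed a threshold $\Theta(\log^2(t+\ell))$. There are few such nodes, so these skip values fit in $o(t+\ell)$ bits. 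Smaller subtrees are resolved by table lookup on $O(\log(t+\ell))$-bit chunks.
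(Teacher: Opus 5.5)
Despite the words ``Enriched Depth-First'' in the statement, this lemma is the paper's summary of the \emph{canonical level-wise} representation: the sentence introducing it says it summarises the canonical $k^2$-tree. The paper's actual depth-first result is Lemma~\ref{lemma:edf}, which gives only $O(\tau)$ traversal time. The intended argument is short. $T$ stores one bit per non-root node at levels $<h$ and $L$ stores the $\ell$ last-level bits, so $|T|+|L|=t-1+\ell$. An $o(t)$-bit directory answers $\rank_1$ on $T$ in $O(1)$ time. Then $\texttt{child}(x,i)=\rank_1(T,x)\cdot k^2+i$ enters any child subtree in $O(1)$ time, and siblings are adjacent in $T$, so no subtree ever has to be skipped. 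Your proposal never uses the level-wise layout. Instead it tries to get constant-time subtree \emph{skipping} in a preorder bit layout with only $o(t+\ell)$ redundancy. The paper never establishes that for a depth-first layout; the DFUDS layout it describes (Lemma~\ref{lemma:dfuds}) spends $(t+\ell)/k^2$ extra bits, a linear term, precisely to obtain it.

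Your route fails at the step you flagged. In a preorder bit sequence, whether a $1$ is an internal node (weight $+(k^2-1)$) or a last-level value (weight $-1$) depends on the depth of its position. That depth is not determined by $O(1)$ local context. To parse a chunk you need the remaining-sibling counts of every ancestor the chunk may return to, which can be $\Theta(\log n)$ bits of state, about as much as the chunk itself. Storing this state per chunk, or storing per-position flags, therefore costs $\Theta(t+\ell)$ bits. Your ``virtual expansion'' presupposes you already know which $1$s to expand.

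The fallback does not rescue the bound either. Write $N=t+\ell$ and $h=\log_k n$. The nodes whose subtree exceeds $\tau$ number $O(Nh/\tau)$, not $O(N/\tau)$, because each minimal large subtree can bring up to $h$ ancestors; this is the count the paper itself uses for EDF. With $\tau=\Theta(\log^2 N)$ and $\Theta(\log N)$-bit skip values, the overhead is $O(k^2N\log n/\log N)$ bits. This is not $o(N)$, since $N=O(n^2)$ gives $\log n=\Omega(\log N)$, and the bound is attained when the large subtrees are spread out. Moreover, a subtree of $\Theta(\log^2 N)$ bits spans $\Theta(\log N)$ machine words, so a single table lookup cannot skip it. Making the overhead $o(N)$ forces $\tau=\omega(\log n\log N)$. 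After that, skipping a below-threshold subtree costs $\Theta(\tau/\log N)$, which is exactly the $O(\tau)$ of Lemma~\ref{lemma:edf}, not constant time.
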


% ------------------------------------------
\section{Depth-First Traversal Representations}
\label{sec:df-traversal}
% ------------------------------------------

The canonical representation of the $k^2$-tree is very space-efficient, using a single bit per node, while still supporting tree navigation in constant time. Yet, despite this substantial efficiency, such a representation poses challenges for identifying the possible presence of identical submatrices in the input matrix. Indeed, such submatrices (corresponding to identical subtrees), due to the breadth-first layout of the canonical representation, have their bits scattered across different tree levels and appear far apart from each other in the {\em tree} $T$ bit sequence. The layout thus makes it a nontrivial task to detect identical subtrees. In addition, the canonical representation does not favour the locality of accesses: nodes close to each other in the tree may be stored far apart in $T$; thus, for certain computations, such as subtree extractions, the layout can lead to many cache misses. To improve locality of accesses and mitigate to the above shortcomings, in the remainder of this section, we describe alternative representations whose layout is informed by a {\em depth-first} traversal of the $k^2$-tree: such an approach presents the substantial advantage of storing the information of a given subtree together in a contiguous portion of the bit sequence, thus improving the locality of accesses and simplifying the task of detecting identical subtrees.

% ------------------------------------------
\subsection{Plain Depth-First representation}
\label{subsec:plain-df}
% ------------------------------------------

The simplest Depth-First layout exhibits node descriptors in the order in which they appear, traversing the tree in depth-first order. To construct this representation, when the DF-visit reaches a non-leaf node $u$, we write the $k^2$ bits associated with $u$'s children. The resulting bit vector $P$ has length $N k^2$, where $N$ is the number of internal nodes of the $k^2$-tree; see~\Cref{fig:k2tree-pdft} for an example.
It is easy to notice that $P$ is a permutation of the bits in $T \cup L$ of the canonical representation. More precisely, both $P$ and $T \cup L$ can be partitioned into $N$ blocks of $k^2$ bits (i.e., $4$ bits when $k=2$). The blocks of $P$ are a permutation of the blocks of $T \cup L$.

\ignore{The simplest Depth-First layout exhibits node descriptors in the order in which they appear, traversing the tree in depth-first order. To construct this representation, when the DF-visit reaches a non-leaf node $u$, we write the $k^2$ bits associated with $u$'s children. The resulting bit vector $P$ has length $m k^2$, where $m$ is the number of internal nodes of the $k^2$-tree; see~\Cref{fig:k2tree-pdft} for an example.\footnote{Recall that $m$ is the number of non-zero values in the binary matrix over which the $k^2$-tree is built. And this number equals the number of internal nodes of the $k^2$-tree.}  
It is easy to notice that $P$ is a permutation of the bits in $T \cup L$ of the canonical representation. More precisely, both $P$ and $T \cup L$ can be partitioned into $m$ blocks of $k^2$ bits (i.e., 4 bits when $k=2$). The blocks of $P$ are a permutation of the blocks of $T \cup L$.}

\begin{figure}[t]
    \centering
    \includegraphics[width=1.0\textwidth]{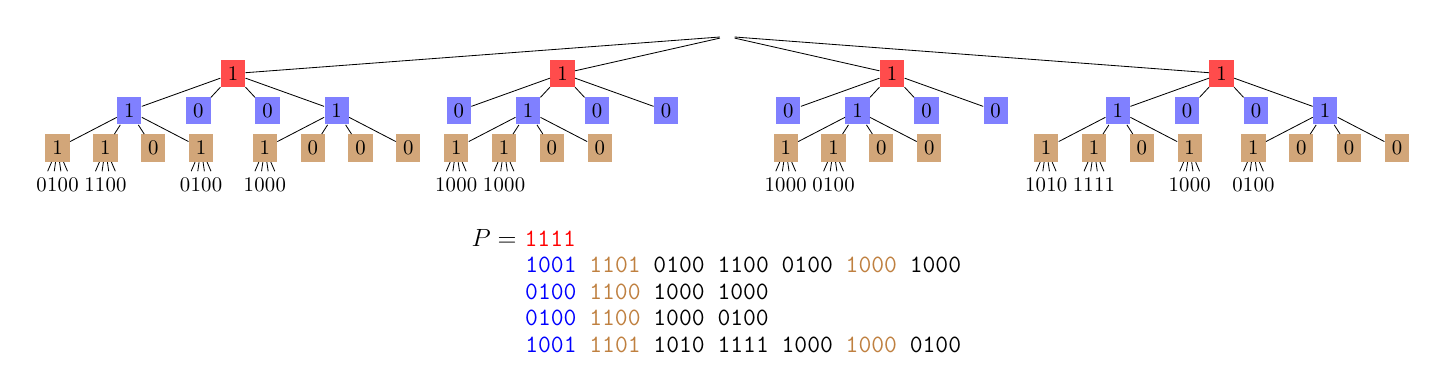}
    %\includestandalone[width=.7\textwidth]{fig/k2tree-pdft.tex}
    \caption{The Plain Depth-First representation of the same tree as~\Cref{fig:matrix-example}. The representation consists of the bit vector $P$, which contains the concatenation of the 4-tuples of bits shown above; the first row represents the children of the root node. Each of the subsequent four rows corresponds to the subtrees rooted at each child of the root, listed in depth-first traversal order; different colours represent different tree levels.}
    \label{fig:k2tree-pdft}
\end{figure}

We call this representation the Plain Depth-First (PDF) representation since we do not add any additional information to speed up the tree navigation. Nonetheless, given a pair of indices $i,j$, we are still able to determine the entry $M[i][j]$ of the matrix represented by the $k^2$-tree, even though this access operation requires a depth-first traversal of the tree up to the node representing the smallest non-empty submatrix containing position $i,j$. Though access to a single entry is inefficient, operations involving the whole matrix (such as computing the vector product $y = M x$) that require visiting the whole $k^2$-tree can be carried out in time proportional to the size of the compressed representation via a simple left-to-right scan of the bit vector $P$.  Due to subtrees appearing in contiguous portions of $P$, the DF-visit is more cache-friendly than visiting the canonical BF-based representation.

% ------------------------------------------
\subsection{Enriched Depth-First representation}
\label{subsec:enriched-df}
% ------------------------------------------

The main drawback of the PDF representation of~\Cref{subsec:plain-df} is arguably that finding the starting position in $P$ of the subtree corresponding to the, say, third child of the root node, requires performing a visit of the first two subtrees. The cost of the visit is proportional to the number of nodes in such subtrees, since the visit consists of a linear scan of a subarray of $P$. So this is a major problem when such subtrees are large. 

The above observation suggests enriching the plain depth-first representation with additional information that allows us to skip large subtrees without visiting them. The simplest approach is to associate with the block of $k^2$ bits representing an internal node the sizes of its subtrees, except for the last one. In the example in~\Cref{fig:k2tree-pdft}, with the block {\textcolor{red}{\tt 1111}}, we would store the sizes of the first three subtrees, e.g., 7 (first subtree), 4 (second subtree), and 4 (third subtree). If we need to access the third subtree, we skip $7+4=11$ blocks, and to access the fourth subtree, we skip $7+4+4=15$ blocks. We do not store the size of the last subtree (7 in our example) because it is not used to skip any subtree. 

Storing this information for all nodes would be too expensive, so we choose a threshold $\tau$ and store the ``skip'' values only for subtrees of size larger than~$\tau$. For example, if we set $\tau=6$, we use ``skip'' values only for the root and for the first and last children of the root, whose subtrees have size 7. Since those children have two subtrees, they only need to store a single skip value (4 in our example). If $N$ is the total number of nodes (remember that $n$ is the matrix size) and we choose $\tau = f(N)$, we can estimate the overhead of storing the \skipvs as follows. Consider the subtrees that have more than $f(N)$ nodes and do not contain any subtree with more than $f(N)$ nodes. Clearly, there are at most $N/f(N)$ such subtrees, and each such subtree will have at most $\log_k n$ ancestors. Not all ancestors are distinct, but we can state that the number of nodes containing \skipvs is at most $O(N\log n /f(N))$.  Assuming we use $O(\log N)$ bits for each skip value, the total overhead for skip values is $O(N k^2\log N\log n /f(N))$ bits. In our experiments, we set $\tau = \sqrt{N}$ and $k=4$ so the overhead is $O(\sqrt{N}\log N\log n)$ bits. Since $N=|P|+1$, this is $o(|P|)$ when $N=\Omega(\log^{2+\epsilon} n)$, that is when the matrix is not pathologically sparse.

We call the above representation the Enriched Depth-First (EDF). A few details of the representation must be decided during implementation. The first is how to encode the skip values: to save space, one could use a variable-length code (e.g., Elias, Rice, Variable-Byte; cf.~\cite[\S11]{pearls_ferragina}) depending on the desired time-space trade-off. In our experiments, we found that when we set $\tau = \Theta(\sqrt{N})$, the number of skip values is relatively small, so we store them explicitly by using $\log N$ bits. We indeed posit that further compressing them would incur a considerable slowdown in exchange for modest compression gains.

The second implementation choice is {where} to store the \skipvs. A first alternative is to interleave them in the bit vector $P$. This approach is cache-friendly because the entire representation remains within a single bit vector, which is mostly accessed via sequential scans. However, the size of the subtrees, and hence the skip values, now also depend on the skip values stored at the lower levels: this makes the construction of the representation more complex, since it now has to be done bottom-up and left-to-right. We therefore decided to store the \skipvs in a separate vector $S$. This is also non-trivial since, to skip a subtree $T_i$, we need to skip the portion of the $P$ vector containing the encoding of $T_i$ nodes {\em and} the portion of the $S$ vector containing the \skipvs for $T_i$. Hence, the size of this portion of $S$ must also be stored; this can be done within the vector $S$ itself. In matrix-matrix multiplication, where submatrices are repeatedly traversed, we combine the above {\em static} set of \skipvs with a {\em dynamic} set: when the recursive multiplication algorithm reaches a submatrix whose corresponding subtree has no \skipvs, such values are computed on the fly for all nodes in the subtree and stored in a temporary vector $S'$. Since, by construction, the subtrees with no \skipvs have fewer than $\tau$ nodes, computing $S'$ takes $O(\tau)$ time and space. 

We summarise the result of this section as follows:

\begin{lemma}\label{lemma:edf}
    A  $k^2$-tree that has $t$ nodes at levels $0,\dots,h-1$ and $\ell$ nodes at level $h$ can be represented using the Enriched Depth-First format in $t+\ell+ O((t+\ell)k^2\log (t+\ell) \log n / f(t+\ell))$ bits and supports the traversal of a subtree in $O(\tau)$ time, with $\tau=f(t+\ell)$.
\end{lemma}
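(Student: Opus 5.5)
The plan is to split the claimed bound into two parts: the bit vector $P$ and the skip-value vector $S$. Then I bound the cost of traversing a subtree using the threshold $\tau$. Throughout, let $N=t+\ell$ be the total number of nodes of the $k^2$-tree. Every non-root node is written as exactly one bit in $P$: the DF-visit writes the $k^2$ child bits of each internal node, and each node other than the root is the child of exactly one internal node. So $|P|\le t+\ell$, which gives the leading term $t+\ell$.

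For the overhead of $S$, I follow the counting argument sketched before the statement, now made precise. Call a node \emph{heavy} if its subtree has more than $\tau=f(N)$ nodes; only heavy nodes store skip values. The heavy nodes are closed under taking ancestors, so they form a subtree $H$ containing the root. Let the \emph{minimal} heavy nodes be those with no heavy child. Their subtrees are pairwise disjoint and each has more than $f(N)$ nodes, so there are fewer than $N/f(N)$ of them. Every heavy node has at least one minimal heavy node among its descendants (a leaf of $H$). Hence every heavy node lies on a root-to-leaf path of $H$, and each such path has length at most $h+1=O(\log n)$. The number of heavy nodes is therefore $O(N\log n/f(N))$. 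Each heavy node stores at most $k^2-1$ skip values, and possibly also the size of its portion of $S$, each in $O(\log N)$ bits. This gives the overhead $O(N k^2\log N\log n/f(N))$ claimed in the statement.

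For traversal, I use induction on the recursion. At a heavy node, I reach the start of the $i$-th child's subtree in $O(k^2)=O(1)$ time. I do this by summing the stored skip values to offset the pointer into $P$, and the stored $S$-sizes to offset the pointer into $S$. The skip values are the sizes of the children's subtrees, which by construction include their own $S$ portions. At a light node (not heavy), the subtree has at most $\tau$ nodes, so $P$ stores it as one contiguous block of at most $\tau$ bits. It can then be scanned left to right, and sibling subtrees skipped by scanning, in $O(\tau)$ time. Alternatively, I can build the on-the-fly vector $S'$ in $O(\tau)$ time as described above. Locating a subtree from the root therefore costs $O(1)$ per heavy level and one $O(\tau)$ light phase, that is $O(\log n+\tau)$. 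Since $\tau=f(N)$ is chosen to be at least $\log n$ in any meaningful regime, this is $O(\tau)$; enumerating a light subtree is likewise $O(\tau)$.

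The main obstacle is the consistency between $P$ and $S$ when skipping. A skip must advance both pointers correctly, which requires that the $S$-entries of a subtree are contiguous and in DF order. I would settle this first by defining $S$ as the DF-order concatenation of the per-node records of heavy nodes. Then the records belonging to a subtree form a contiguous interval, and their total length can be stored in the parent's record. The remaining steps are routine counting.
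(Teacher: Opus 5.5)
Your proposal is correct and follows essentially the paper's argument. The $t+\ell$ term comes from $P$. The skip-value overhead comes from the fewer than $N/f(N)$ minimal heavy subtrees, each with $O(\log n)$ ancestors carrying at most $k^2-1$ values of $O(\log N)$ bits. Subtrees without skip values have fewer than $\tau$ nodes and are scanned (or get $S'$ built) in $O(\tau)$ time. You make the disjointness argument and the $P$/$S$ pointer bookkeeping more explicit than the paper does.

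Your additional root-to-subtree bound $O(\log n+\tau)$ needs the assumption $\tau=\Omega(\log n)$, but you don't need it. The per-step $O(\tau)$ cost you also establish is the sense in which the paper states the lemma.
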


%---------------------------------------------------

% ------------------------------------------
\subsection{Balanced Parenthesis representations}
\label{subsec:bp}
% ------------------------------------------

In this section, we introduce a new DF-based representation of $k^2$-trees, based on Balanced Parenthesis (BP) encoding, following the ideas of Munro and Raman~\cite{Munro_Raman_2002}. 
Our representation follows the classical approach used for general trees; yet, we exploit the special structure of the $k^2$-tree, introducing an optimisation for the last-level nodes. Instead of representing these nodes with parentheses, we store their actual values (i.e., bits) in a separate binary vector~$L'$. We describe two implementations of this idea, each using a different auxiliary data structure to support tree navigation. 

% ------------------------------------------
\subsubsection{Classical Balanced Parenthesis representation}
\label{subsec:BPclas}
% ------------------------------------------

Let $h = \lceil\log_k n\rceil$ denote the tree height. 
The construction of our BP representation is done as follows. We visit the $k^2$-tree in depth first order, and we write to a vector $\BP$ an open parenthesis \texttt{(} every time we start the visit of a node, and we write a closing parenthesis \texttt{)} every time the visit of a node is complete, and we go back to the parent node. However, when we reach a level-$(h-1)$ node that is not a leaf, instead of visiting its children, we write a pair \texttt{()} and write the $k^2$ bits associated with its children (i.e., values of those leaves) to the bit vector $L'$. See~\Cref{fig:k2tree-bp} for an example. 

Note that the subtrees rooted at a level $h-1$ are represented by either \texttt{()}, if they have no children, or by \texttt{(())} if they do have children. Since in a $k^2$-tree every node has either $k^2$ children or none, the sequence \texttt{(())} cannot represent a subtree rooted at a level $< h-1$. Hence, there is a one-to-one correspondence between the occurrences of the pattern 
\texttt{(())} in $\BP$, the subtrees with children at level $h-1$, and the blocks of $k^2$ bits in the $L'$ vector. By construction, the correspondence is order-preserving in the sense that the $i$-th occurrence of \texttt{(())} corresponds to the $i$-th block in $L'$.
This means that to access the bit values stored in a level-$h$ leaf, we need to locate the position $p$ of its parent, and count the number $x$ of occurrences of the pattern  \texttt{(())} in $\BP$ up to position $p$; the desired bit values are stored starting from position $x k^2$ in $L'$.  

\begin{figure}[t]
    \centering
    %\includestandalone[width=\textwidth]{PAPER-SPIRE/graphics/k2tree-bp.tex}
    % \includestandalone[width=\textwidth]{fig/k2tree-bp.tex}
    \includegraphics[width=\textwidth]{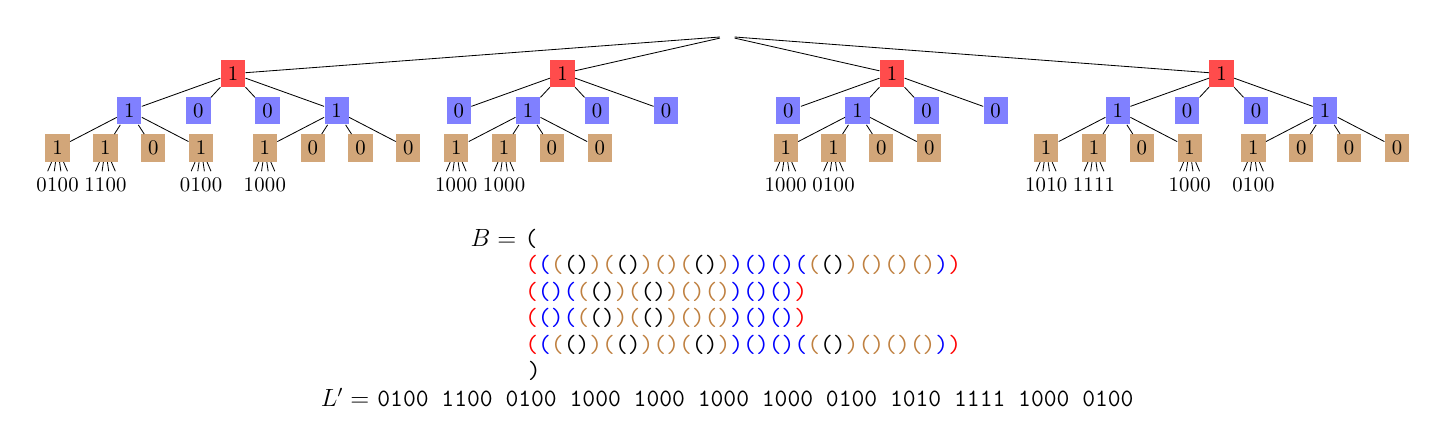}
    \caption{The balanced parenthesis representation of the $k^2$-tree of~\Cref{fig:matrix-example}. The resulting parenthesis vector $\BP$ is the concatenation of six rows of parentheses:
    The first and last row are the open and closed parentheses for the root, and the other four rows represent the subtrees rooted at the four children of the root. Also shown is the bit vector $L'$, which contains the values at the bottom level, stored in depth-first visit order. Notice that $L'$ consists of $12$ groups of $k^2=4$ bits, and $B$ contains indeed $12$ occurrences of the substring \texttt{(())}.}
    \label{fig:k2tree-bp}
\end{figure}

Let $t$ denote the number of nodes in levels $0,\ldots, h-1$ and $\ell$ denotes the number of level-$h$ leaves. 
Our representation uses one bit for each leaf, two parentheses for each internal node, and an additional \texttt{()} pair for each one of the $\ell/k^2$ non-leaf nodes at level $h-1$. Hence the parenthesis vector $\BP$ and the bit vector $L'$ have length

\begin{equation}\label{eq:bp}
|\BP| = 2t +\frac{2\ell}{k^2},\qquad
|L'| = \ell.     
\end{equation}

The BP representation, therefore, takes roughly twice as much space as the canonical and (Enriched) Depth-First representations, which take $t+\ell$ bits plus possibly lower-order terms to support fast navigation; see Lemmas~\ref{lemma:bfs} and~\ref{lemma:edf}.

The advantages of the BP representation are locality of reference and the possibility, as explored in~\Cref{sec:compression}, to compress identical subtrees. To ensure constant-time navigation with the BP representation, we need to support the \findc\ operation in constant time, which, as recalled in~\Cref{sec:notation}, can be done using $o(|\BP|)$ bits of auxiliary space.  In addition, when we reach an internal node at level $h-1$, to read the corresponding bits in~$L'$ we need to support the constant time \rank\ operation for the pattern $\texttt{(())}$ on the vector $\BP$; this can be achieved by a straightforward modification of the \rank\ data structure, still using $o(|\BP|)$ bits of auxiliary space. 

% ------------------------------------------
\subsubsection{Full traversal BP variant}
\label{subsec:BPeff}
% ------------------------------------------

The main drawback of the previous approach is that using an additional data structure to support the \texttt{find\_close} operation in constant time introduces extra overhead, on top of the already doubled space required by the \textit{BP} representation. Yet, this is necessary if we want a ``fully functional'' $k^2$-tree that achieves the same time complexity for all operations, such as random access. Since this paper focuses on operations that require a full traversal of the tree, it is natural to look for alternative solutions.

To replace the data structure used for traversal, we propose a two-level solution that combines the enriched nodes introduced in Section~\ref{subsec:enriched-df} with excess array techniques (inspired by~\cite{Munro01, navarro2014fully, GEARY2006231}).

At the first level, we enrich the nodes that are roots of subtrees larger than a threshold $\tau$. The same analysis from~\Cref{subsec:enriched-df} applies to the space overhead. Even though this was originally studied for EDF, it allows us to skip at least $\tau$ nodes; in terms of bits, this is significant, since we can skip at least $2 \cdot \tau$ bits.

The second level consists of the classical excess precomputation~\cite[\S7]{navarro_book} over the tree. The excess corresponds to the number of opening parentheses minus the number of closing parentheses in the range $B[1, i]$. 

Now, we divide the sequence into blocks of size $e$. For each block, we store: the excess at the end of the block, the minimum excess within the block, and the number of leaves in the block. The first two values are standard in solutions for computing \texttt{find\_close} over a sequence of parentheses. The third value is specific to the $k^2$-tree and avoids the need for a \rank\ operation over the pattern \texttt{(())}. We can exploit the particular structure of the $k^2$-tree to store all this information efficiently. The excess can be stored using $\log h = \log \log n$ bits, since the height of the tree bounds it. The minimum excess of a block can also be stored in $\log \log n$ bits. Finally, the number of leaves can be stored in $\log \frac{e}{4}$ bits, because at most $e/4$ leaves can appear within a block of $e$ parentheses.

In terms of space complexity, we still require $o(|B|)$ bits of auxiliary space, but in practice this leads to lower overall space usage. As for time complexity, the most expensive operations in this layout are those requiring skipping a subtree, such as random access. The time complexity of the \texttt{find\_close} operation becomes $\Oh (\tau / e)$. If $\tau = \Oh (\sqrt{|B|})$, then this is $\Oh (\sqrt{|B|} / e)$. This relation holds since, in the worst case, we traverse the tree in constant-time steps until reaching a subtree of size $\tau - 1$. If we then need to perform \texttt{find\_close}, we must scan that subtree using the block structure described above, which requires $(\tau - 1) / e$ steps.

In practice, we contend that the requirement to scan the subtrees does not significantly affect performance, as it can be carried out efficiently given that the subtrees are small. In operations such as matrix-matrix multiplication, there is a high probability of traversing most of the tree anyway. Indeed, we will see in~\Cref{sec:experiments} that this technique considerably improves space usage, both in memory and on disk, as well as execution time, compared to the classical BP representation (as implemented in our original paper~\cite{conf_version}).

We summarise the result of this section as follows:

\begin{lemma}\label{lemma:bp}
    A  $k^2$-tree that has $t$ nodes at levels $0,\dots,h-1$ and $l$ nodes at level $h$ can be represented using the Balanced Parenthesis format in $2t+2\ell/k^2+\ell+ o(t+\ell/k^2)$ bits and supports the traversal of a subtree in $\Oh(\tau)$ time.
\end{lemma}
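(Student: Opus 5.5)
The plan is to prove the space bound and the time bound separately, reusing the counting of Equation~\eqref{eq:bp} and the skip-value analysis behind Lemma~\ref{lemma:edf}.

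\textbf{Space.} Count the symbols written during the depth-first construction. Each node at levels $0,\dots,h-1$ contributes one opening and one closing parenthesis. Each of the $\ell/k^2$ non-leaf nodes at level $h-1$ contributes an extra \texttt{()} pair. Each level-$h$ leaf contributes one bit to $L'$. Together this gives $|\BP|+|L'| = 2t+2\ell/k^2+\ell$ bits, as in Equation~\eqref{eq:bp}.

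Next, bound the auxiliary structures by $o(|\BP|)=o(t+\ell/k^2)$. For the classical variant of Section~\ref{subsec:BPclas}, I would invoke the known $o(|\BP|)$-bit structures for $\findc$~\cite{Munro_Raman_2002}. I would also invoke the $o(|\BP|)$-bit rank structure adapted to the pattern \texttt{(())}. To justify the adaptation, note that the pattern spans at most four positions, so the usual block and superblock counts can be built with a constant-size boundary correction.

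For the full-traversal variant of Section~\ref{subsec:BPeff}, there are two auxiliary structures.
\begin{itemize}
\item The skip values are charged using the argument of Section~\ref{subsec:enriched-df}. It gives $O(N\log N\log n/f(N))$ bits with $N=O(t+\ell/k^2)$, which is $o(N)$ for $\tau=f(N)=\sqrt N$ under the paper's standing non-degeneracy assumption.
\item The excess directory stores $O(\log\log n+\log e)$ bits per block of $e$ parentheses. Choosing $e=\omega(\log\log n)$, for instance $e=\Theta(\log^2 n)$ or simply $e=\omega(\log\log n+\log e)$, makes this $o(|\BP|)$.
\end{itemize}

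\textbf{Time.} The goal is to show that a subtree can be skipped, that is, its matching closing parenthesis located, in $O(\tau)$ time.
\begin{itemize}
\item If the subtree root carries skip values, the jump is a constant number of arithmetic steps.
\item Otherwise the subtree has fewer than $\tau$ nodes, so its encoding occupies $O(\tau)$ parentheses. Scanning it block by block with the stored end excess and minimum excess finds the block containing the match in $O(\tau/e+1)$ steps. A final scan inside that block costs $O(e)$, or $O(1)$ with word-level tables.
\item Meanwhile, the per-block leaf counts tell us how far to advance in $L'$, so no rank on \texttt{(())} is needed.
\end{itemize}
In either case the cost is $O(\tau)$, since $e$ can be chosen at most $\tau$. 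For the classical variant the bound is even $O(1)$, which trivially satisfies the claim.

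\textbf{Main obstacle.} The step I expect to be delicate is the interplay between the skip values and the lower-order term. The overhead in Lemma~\ref{lemma:edf} is only $o(t+\ell)$ for suitable $f$ and non-pathological sparsity, so the $o(t+\ell/k^2)$ in the statement must be read with the same assumption, $N=\Omega(\log^{2+\epsilon}n)$. I would make that assumption explicit, or alternatively state the proof for the classical variant, which gives unconditional $o(|\BP|)$.

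A second point to check carefully is the bound $\log h$ on the stored excess values. The depth of the tree, counting the extra \texttt{()} level, is at most $h+1$, so the stored values fit in $O(\log\log n)$ bits; this is what keeps the directory overhead sublinear.
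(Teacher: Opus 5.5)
Your proposal is correct and follows essentially the same route as the paper: the symbol count of Equation~\eqref{eq:bp}, the two-level auxiliary structure of Section~\ref{subsec:BPeff} (skip values charged via the Enriched Depth-First analysis, plus per-block end excess, minimum excess and leaf counts), and a skip time obtained by constant-time jumps down to a subtree of fewer than $\tau$ nodes followed by a block-wise scan. Making the non-degeneracy assumption explicit for the skip-value term, and noting that the classical variant already gives $O(1)$ skips with unconditional $o(|\BP|)$ overhead, are sensible refinements of points the paper leaves implicit.
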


\subsection{DFUDS representation}
\label{subsec:dfuds}
%---------------------------------------------------

Simultaneously to the proposal of our depth-first representations~\cite{conf_version}, Fari\~na et al.\ introduced in~\cite{cache-friendly-boolean} a DFUDS-based representation. Let $t$ be the number of nodes in levels $0,\ldots,h-1$, and let $\ell$ be the number of level-$h$ leaves. If the representation directly used DFUDS, the space required would be $2(t+\ell)$ bits. To avoid this overhead, they propose a representation that retrieves DFUDS information from the $k^2$-tree signatures.

They build the $k^2$-tree using a $DF$-visit, as we do in the PDF representation. Then, for each group of $k^2$ bits, they prepend one bit: $0$ if the node is not at the last level, and $1$ if the node is at level $h-1$. With this signature, they can differentiate internal nodes from value matrices. Using this bit, when reading a node, they transform the $k^2$ bits into their DFUDS representation. If the signature is $0$, the DFUDS representation of that node is $(^c)$, where $c$ is the number of children. If the signature is $1$, the DFUDS representation is $(^c))^c$, which corresponds to the $c$ children at the last level plus the $c$ values at the last level.

\begin{figure}[t]
\centering
\includegraphics[width=\textwidth]{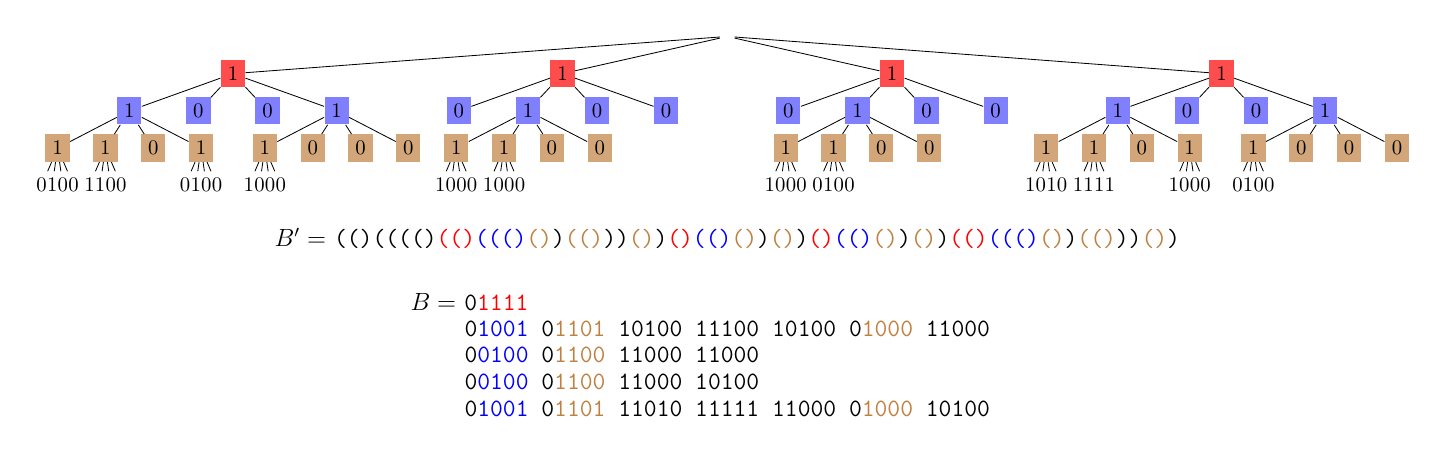}
\caption{The cardinal DFUDS representation of the $k^2$-tree of~\Cref{fig:matrix-example} and the DFUDS representation by Fari\~na et al.~\cite{cache-friendly-boolean}. The initial sequence {\tt (()} is prepended as a guard and is therefore ignored in the sequence $B$. Note that $B'$ is not stored; it is only shown for illustration.}
\label{fig:k2tree-dfuds}
\end{figure}

\Cref{fig:k2tree-dfuds} shows $B'$, the DFUDS corresponding to the $k^2$-tree representation described above. As an example, the first node \texttt{0\color{blue}1001} corresponds to the first node \texttt{\color{red}(()} in $B$, and the first node \texttt{101000} corresponds to the node \texttt{{\color{brown}()})} in $B$.

With this representation, the number of added bits is $(t+\ell)/k^2$, so the total space used is $t + \ell + (t+\ell)/k^2$ bits.

Since they can transform the nodes of the bit array representing the $k^2$-tree into the DFUDS representation, they show that it is possible to build a \textit{range min-max tree} over the resulting sequence, and to compute \texttt{fwd\_search}, which in DFUDS supports the \texttt{child} operation. The \texttt{fwd\_search} can be computed in constant time, and therefore \texttt{child} can also be computed in constant time. This data structure adds an extra $o(|B|)$ bits.

We summarise the results of this section as follows:

\begin{lemma}\label{lemma:dfuds}
    A  $k^2$-tree that has $t$ nodes at levels $0,\dots,h-1$ and $l$ nodes at level $h$ can be represented using the Depth-First Unary Degree format in $t + \ell + (t+\ell)/k^2+ o(t + \ell + (t+\ell)/k^2)$ bits, and supports the traversal of a subtree in constant time.
\end{lemma}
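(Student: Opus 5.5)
We prove the space bound and the time bound separately, reading both directly off the encoding of Fari\~na et al.\ described above.

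\emph{Space.} Consider the stored bit array, produced by the DF-visit used for the PDF representation of~\Cref{subsec:plain-df}. It consists of one block of $k^2$ bits per internal node, and each block is preceded by a one-bit signature. The $k^2$ bits of a block are exactly the values of the node's children. Every node other than the root appears once as such a child bit. Hence the data bits total $t+\ell$, up to the single root bit, which is absorbed as an additive constant; this is the same count used for $P$ in~\Cref{subsec:plain-df}. The number of blocks equals the number of internal nodes. Each internal node has exactly $k^2$ children, so this number is at most $(t+\ell)/k^2$, which bounds the signature bits. It remains to bound the navigation structure: a range min-max tree built over the virtual DFUDS sequence $B'$. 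Its size is $o(|B'|)$, where $|B'|$ is the length of $B'$, not of the stored array. So we must relate $|B'|$ to the stored length. Each block expands to at most $2k^2+1$ parentheses: the form $(^c)$ or $(^c))^c$, with $c\le k^2$. Therefore $|B'| = O(k^2)$ times the number of blocks, which is $O(t+\ell)$. Since $k$ is constant, $o(|B'|)=o(t+\ell+(t+\ell)/k^2)$, and the stated space bound follows.

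\emph{Time.} Suppose we are at a node whose $k^2$-bit block starts at a known position, and we want to reach the start of the encoding of its $i$-th child subtree, or skip a whole subtree. In DFUDS this is \texttt{child}, implemented by one \texttt{fwd\_search} over $B'$. The range min-max tree answers \texttt{fwd\_search} in constant time, so \texttt{child} and subtree skipping cost $O(1)$. The result is then mapped back to a block index in the stored array.

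The main obstacle is the mapping between positions in the virtual sequence $B'$ and positions in the stored array. Each block's DFUDS expansion has a length that depends on its popcount and signature, so the correspondence is not uniform. We plan to handle it as follows. The rmM-tree leaves cover a constant number of stored blocks, and each leaf additionally stores the offset into $B'$ together with the excess and minimum of the expanded chunk. Within a leaf, the in-block excess profile is obtained from a constant-size lookup table indexed by the $(k^2+1)$-bit block. These per-leaf summaries fit within the $o(\cdot)$ budget above, and each lookup costs constant time. This gives constant-time \texttt{fwd\_search} on $B'$ without ever materialising it. We would cite~\cite{cache-friendly-boolean} for the details of this construction.
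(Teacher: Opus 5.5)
Your space accounting and your reduction of subtree traversal to a single \texttt{fwd\_search} on the virtual DFUDS sequence are the paper's own argument. The paper just summarises Fari\~na et al.: $t+\ell$ data bits (precisely $t+\ell-1$), one signature bit per internal node, i.e.\ $(t+\ell-1)/k^2\le(t+\ell)/k^2$ bits, and a range min-max tree of $o(|B'|)$ bits with constant-time \texttt{fwd\_search}. Your explicit bound $|B'|\le 3(t+\ell)+O(1)$ converts $o(|B'|)$ into the $o(t+\ell+(t+\ell)/k^2)$ of the statement. The paper leaves this step implicit, and it does not even need $k$ constant.

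The construction you sketch for the ``main obstacle'', however, does not meet the bounds you attribute to it, so it should not be part of the proof.
\begin{itemize}
\item \emph{Space.} If every rmM-tree leaf covers $O(1)$ stored blocks, there are $\Theta(t+\ell)$ leaves. Any per-leaf field then costs $\Omega(t+\ell)$ bits. The absolute offsets into $B'$ alone take $\Theta((t+\ell)\log(t+\ell))$ bits, which is not $o(t+\ell)$.
\item \emph{Time.} A tree over $\Theta(t+\ell)$ leaves has logarithmic height, so \texttt{fwd\_search} costs $O(\log(t+\ell))$ no matter how cheap the in-leaf work is. Constant-time table lookups on single $(k^2+1)$-bit blocks do not change that.
\end{itemize}
Getting $o(\cdot)$ space and $O(1)$ time together needs sparse sampling: absolute offsets only every polylogarithmically many bits, $O(\log\log n)$-bit relative offsets in between, and universal tables indexed by $\tfrac12\log n$-bit chunks of the stored array. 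Such tables work because each block expands into at most $2k^2+1$ parentheses. On top of this you need the multi-level machinery of Navarro and Sadakane~\cite{navarro2014fully} for constant-time \texttt{fwd\_search}. Either work that out, or do what the paper does and rest both the $o(\cdot)$ term and the constant-time claim on the cited result of Fari\~na et al.
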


% ------------------------------------------
\section{Subtree compression on $k^2$-trees}
\label{sec:compression}
% ------------------------------------------

Identical submatrices in the original input matrix $M$ can lead to identical subtrees in the $k^2$-tree representing $M$. We now show how to detect identical subtrees, replace them with a ``pointer'' to a previous occurrence, and perform navigation operations on this compressed representation. The algorithm for detecting identical subtrees is based on the Suffix Array and the LCP array and runs in linear time, i.e., time proportional to the size of the $k^2$-tree representation. 

% ------------------------------------------
\subsection{Compressed Balanced Parenthesis}
\label{subsec:comp-bp} 
% ------------------------------------------

For the BP representations, instead of considering identical subtrees in the traditional sense, we consider the wider concept of \identical subtrees according to the following definition, that, for our convenience, excludes the pathological case of subtrees consisting of a single node. 

\begin{definition} \label{def:isubtrees}
    Let $T_1$ and $T_2$ be subtrees of the $k^2$-tree $T$ featuring height greater than 1. We say that $T_1$ and $T_2$ are {\em \identical} if their respective balanced parenthesis sequences generated during the visit described in~\Cref{subsec:bp} are identical.     
\end{definition}

We point out that the balanced-parenthesis sequences generated during the visit in~\Cref{subsec:bp} do not include any information about the leaves at the last level: we store instead in binary form the leaves representation in the vector $L'$. Hence, some subtrees are \identical even if some of the values stored on the last level differ; see, for example, the subtrees surrounded in red in~\Cref{fig:ex-idem-subtree}.

\begin{figure}[t]
    \centering
    \includegraphics[width=\textwidth]{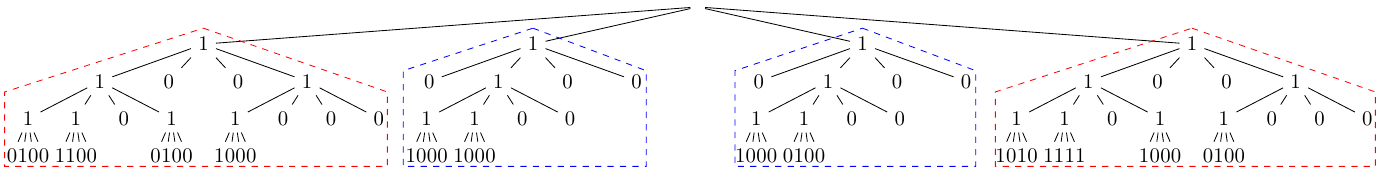}
    \caption{The subtrees surrounded by dashed lines of the same colour are \identical according to Definition~\ref{def:isubtrees}. Note that the subtrees surrounded in red are \identical even if the values stored in some of the leaves at the last level are different.}
    \label{fig:ex-idem-subtree}
\end{figure}

\begin{lemma}\label{lemma:identical}
    The subtree starting at position $i$ in $\BP$ is \identical with the subtree starting in position $j$, if and only if, setting $c=\mbox{\rm \texttt{find\_close}}(i)$, the substring $\BP[i..c]$ is equal to $\BP[j..j+c-i]$. 
\end{lemma}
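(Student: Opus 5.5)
The plan is to reduce the statement to two facts about balanced parenthesis sequences. First, the BP encoding of a subtree is a contiguous interval of $\BP$. Second, a balanced string cannot be a proper prefix of another string that is itself the encoding of a subtree.

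We begin by making the first fact precise. The visit of~\Cref{subsec:bp} writes \opp\ when it enters a node $u$. It then writes, recursively, the encodings of $u$'s subtrees; for a non-leaf node at level $h-1$ this is the fixed pair \texttt{()}. Finally it writes \cpp\ when it leaves $u$. By induction on the height of $u$, the sequence written between entering and leaving $u$ is a balanced parenthesis sequence. Moreover, every proper nonempty prefix of it has strictly positive excess. Hence if $u$'s encoding starts at position $i$, its matching parenthesis is exactly $c=\findc(i)$, and the encoding of the subtree rooted at $u$ is precisely $\BP[i..c]$. The same holds for the subtree starting at $j$, whose encoding is $\BP[j..c']$ with $c'=\findc(j)$.

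For the ``only if'' direction we assume the two subtrees are \identical. By Definition~\ref{def:isubtrees}, $\BP[i..c]=\BP[j..c']$. These strings then have the same length, so $c'-j=c-i$, and $\BP[j..j+c-i]=\BP[i..c]$ follows immediately.

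For the ``if'' direction we assume $\BP[i..c]=\BP[j..j+c-i]$. We must show that $j+c-i$ is exactly $\findc(j)$, so that this substring is the whole encoding of the subtree at $j$ and not a prefix or an overshoot of it. The excess profile of $\BP[j..j+c-i]$ coincides with that of $\BP[i..c]$. That profile starts at $1$ after the first symbol, stays strictly positive on every proper prefix, and reaches $0$ exactly at the last symbol. Since \findc\ returns the first position after $j$ where the relative excess returns to $0$, we get $\findc(j)=j+c-i$. The subtree at $j$ therefore has encoding $\BP[j..j+c-i]=\BP[i..c]$, and the two subtrees are \identical. The height-greater-than-one restriction of Definition~\ref{def:isubtrees} also transfers. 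The height can be read off the encoding as its maximum excess, suitably accounting for the \texttt{(())} pattern, so equal encodings imply equal heights.

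The only step needing care is the structural claim that a node's encoding is exactly the interval from its opening parenthesis to its matching close. This must hold even with the special treatment of level-$(h-1)$ nodes, where the children are replaced by a single \texttt{()} and their bits are moved to $L'$. I would handle it by observing that the replacement \texttt{()} is itself balanced, so the induction goes through unchanged. The rest is the standard fact that matching is determined by the excess profile.
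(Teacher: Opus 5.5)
Your proof is correct and takes the same route as the paper: the subtree at $i$ is encoded exactly by $\BP[i..c]$ with $c=\findc(i)$, and the lemma then reduces to comparing strings. You also supply steps the paper's proof leaves implicit, namely the ``only if'' direction and the excess-profile argument that $\findc(j)=j+c-i$, which guarantees that the matched substring is the entire encoding of the subtree at $j$ rather than a prefix of it or a string running past it.
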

\begin{proof}
Note that, by construction, it is $\BP[i]=\BP[j]=\texttt{(}$. The position $c=\texttt{find\_close}(i)$ is the position of the closed parenthesis matching $\BP[i]$. Hence, the visit of the entire subtree starting at position $i$ generates the string $\BP[i..c]$. If such a string is identical to $\BP[j..j+c-i]$, then the two subtrees are interchangeable.
\end{proof}

Because of the above Lemma, we can detect all \identical subtrees in linear time as follows. Firstly, we compute the suffix array $SA$ and the $LCP$ array of the parenthesis sequence $\BP$. For $i=1,\ldots |\BP|-1$, if $\BP[SA[i]] = \texttt{(}$, we set $c=\findc(SA[i])$. Then, if $c-SA[i] \leq  LCP[i]$, by the above Lemma we can conclude that the subtree starting at position $SA[i]$ is \identical with the subtree starting at $SA[i-1]$. During this procedure, we ignore positions $i$ when $LCP[i] \leq 32$ (or some other larger threshold) since in that case an \identical subtree would be so small that replacing it with a pointer would offer no benefit.

The above algorithm will find {\em all} \identical subtrees, but for our purposes, we do not need all this potential. The problem is that subtrees of \identical subtrees are still \identical. Yet, we are only interested in finding {\em maximal} \identical subtrees, i.e., subtrees which are not contained in larger \identical subtrees. Thus, instead of scanning the $SA$ array, we scan the parenthesis sequence $\BP$: for $j=1,\ldots,|\BP|$, if $\BP[j] = \texttt{(}$ we set $i=SA^{-1}[j]$ and we check whether $LCP[i] > \texttt{find\_close}(j)-j$. Note that if we have $k$ consecutive suffixes $SA[i],\ldots, SA[i+k-1]$ each representing \identical subtrees, then they are all \identical among themselves. In this case, we select the subtree with the leftmost starting position as the reference subtree, and the other subtrees will point to it.
If the subtree starting at $j$ is \identical with a previous subtree, we restart the scanning of $\BP$ at position $\texttt{find\_close}(j)+1$, thus skipping all non-maximal \identical subtrees contained in the subtree starting at $j$.

Having identified which pointers should replace (maximal) subtrees, we call them {\em pruned} subtrees. The construction of the Compressed Balanced Parenthesis (CBP) representation is done as follows. 
Let $t_c$ be the number of nodes in levels $0,\dots, h-1$, $h$ be the number of tree levels, $\ell$ be the number of level $h$ leaves in $\BP$, $\ell_c$ be the number of leaves at level $h$ remaining in the $\CBP$ after subtree compression, and $p$ be the number of pruned subtrees. Given the uncompressed sequence of balanced parentheses $\BP$, we build the compressed sequence $\CBP$ by scanning $\BP$ from left to right. When we reach a position $i$ such that $\BP[i] = \texttt{(}$ and the subtree rooted at $i$ must be pruned, we stop copying until the matching parenthesis at $\findc(i)$. Then, we write in $\CBP$ a ``special node'' that represents this pruned subtree.

A first option for representing this ``special node'' and, thus, a pruned subtree is to use the pattern \texttt{(())}. Note that $\CBP$ is still balanced. However, the same pattern is also used for internal nodes at level $h-1$. To distinguish both cases, we introduce a binary vector $\Pru[1,\ell_c + p]$ such that: For each occurrence of the pattern \texttt{(())} in $\CBP$, $\Pru[i]$ stores a $0$ if it corresponds to an internal node at level $h-1$; otherwise, it stores a $1$ if it corresponds to a pruned subtree.

We also use an integer array $\Tar[1,p]$, where $\Tar[v]$ stores the starting position in $\CBP$ of the reference subtree for the $v$-th pruned subtree. When navigation reaches a position $i$ in $\CBP$ with an \texttt{(())} pattern that is not at level $h-1$, we compute $r = \rank_{\texttt{(())}}(\CBP, i)$ and then $v = \rank_1(\Pru, r)$. By construction, the pruned subtree at position $i$ is the $v$-th pruned subtree, so $j = \Tar[v]$ gives the starting position of its reference subtree in $\CBP$. To compute the rank of the pattern \texttt{(())}, we use a classical rank technique~\cite{navarro_book}. We precompute the prefix sums over blocks of fixed size (in our experiments, $4096$ bits). Then, within each block, we count the number of occurrences using bitwise operations on machine words of $64$ or $128$ bits. This auxiliary data structure requires an additional $o(|B_c|)$ bits, and since both the block size and the bitwise operations run in constant time, the overall rank computation is performed in constant time.

Since we prune subtrees but not the values in $L'$, we still need to know how many values in $L'$ correspond to each pruned subtree. This can be stored in an integer array $\Len[1,p]$, where $\Len[v]$ corresponds to the $v$-th pruned subtree, or computed on the fly during traversal.

The size of the resulting compressed balanced parentheses is:
$$|\CBP| = 2t_c + 4p + \frac{2\ell_c}{k^2}.$$

In the worst case, when nothing is pruned, we have $t_c = t$ and $p = 0$, so $|\CBP| = |\BP|$.

For the auxiliary structures, we use $(\ell_c + p) + o(\ell_c + p)$ bits for $\Pru$ and for supporting $\rank$ over it. If $p$ is small, a sparse bit-vector representation such as Elias--Fano may be used, requiring
$2p + p \lceil \log((\ell_c + p)/p)\rceil$ bits.
The array $\Tar$ uses $p \log(2t_c + 4p + 2\ell_c/k^2)$ bits.~\Cref{fig:repl-k2tree} shows the $k^2$-tree of~\Cref{fig:k2tree-bp} with two subtrees pruned, and the resulting $\CBP$ sequence.

This first approach is intuitive, but it can use too much space due to the bit vector $\Pru$ when $\ell_c$ and $p$ are large. It also adds time overhead, since we must perform an extra \rank\  operation to check whether an \texttt{(())} pattern corresponds to a pruned subtree or to a leaf. For this reason, we instead represent pruned subtrees using the pattern \texttt{(()())}. This pattern is useful because it cannot appear in a $k^2$-tree subtree, and it lets us remove $\Pru$ completely. We still need to perform a \rank\ over $\CBP$, but now for the pattern \texttt{(()())}. Note that this \rank\ operation is required no matter whether we use \texttt{(())} or \texttt{(()())}. With this change:
$$|\CBP| = 2t_c + 6p + \frac{2\ell_c}{k^2}.$$

This removes $\Pru$ from the space usage and avoids the extra \rank\ operation.

Finally, to support $\findc$, the straightforward solution is to adopt the data structure described in~\Cref{subsec:BPeff}, which yet requires adding extra information on both levels. At the first level, one would also store, along with the enriched values, the number of pruned nodes in the subtree associated with it. In the second level, we will add, for each block, how many times the pattern \texttt{(()())} appears; those values can be encoded using $\log (e/6)$ bits.

\begin{figure}[t]
    \centering
    \includegraphics[width=\textwidth]{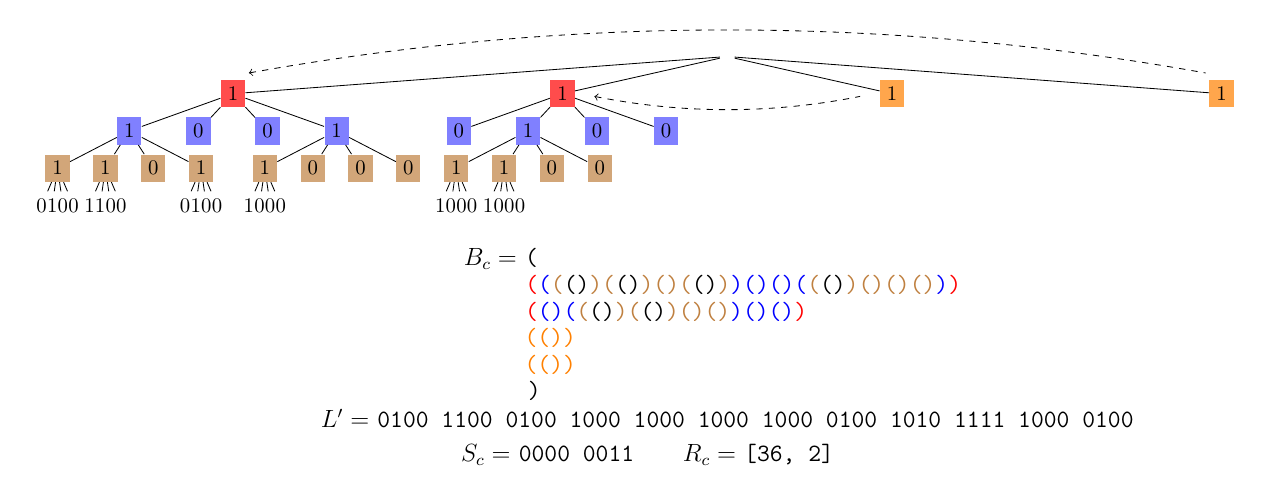}
    %\includestandalone[width=\textwidth]{fig/k2tree-replaced.tex}
    \caption{The Compressed Balanced Parenthesis representation for the $k^2$-tree shown in~\Cref{fig:k2tree-bp} with two subtrees (in orange) pruned and replaced with pointers to \identical subtrees. The two pointers are explicitly stored in the array $\Tar$.}\label{fig:repl-k2tree}
\end{figure}

We summarise the result of this section as follows:

\begin{lemma}\label{lemma:cbp}
    Given a $k^2$-tree of height $h$ with $t$ nodes at levels $0,\dots, h-1$ and $\ell$ leaves at the last level. The respective subtree-compressed $k^2$-tree, that has $t_c$ nodes at levels $0,\dots, h-1$, $l_c$ leaves at level $h$, and $p$ pruned subtrees, can be represented using balanced parentheses and be stored using $2t_c+6p+2\ell_c/k^2+\ell+o(t_c+p+\ell_c/k^2)$ bits. It supports the traversal of a subtree in $O(\tau)$ time.
\end{lemma}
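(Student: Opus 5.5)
The proof is an accounting argument over the components of the CBP representation built in \Cref{subsec:comp-bp}, followed by a reduction of subtree traversal to the two-level \findc\ scheme of \Cref{subsec:BPeff}. The components are the compressed parenthesis sequence $\CBP$, the leaf vector $L'$, the pointer array $\Tar$ (and optionally $\Len$), and the auxiliary rank and \findc\ structures.

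\textbf{Space.} I would first derive $|\CBP| = 2t_c + 6p + 2\ell_c/k^2$ by the same count as in \eqref{eq:bp}.
\begin{itemize}
\item Each surviving node at levels $0,\dots,h-1$ contributes one matching pair, giving $2t_c$.
\item Each of the $\ell_c/k^2$ surviving non-leaf nodes at level $h-1$ contributes an extra \texttt{()}, giving $2\ell_c/k^2$.
\item Each pruned subtree is replaced by the six-symbol pattern \texttt{(()())}, giving $6p$.
\end{itemize}
Since only parentheses are pruned and not the values, $L'$ keeps all $\ell$ leaf bits, which yields the $+\ell$ term.

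The remaining items must be shown to be lower order.
\begin{itemize}
\item The rank structure for \texttt{(()())} uses block prefix sums, so it takes $o(|\CBP|)$ bits.
\item The excess/min-excess/pattern-count blocks of \Cref{subsec:BPeff} take $O(|\CBP|\log\log n/e)$ bits, which is $o(|\CBP|)$ for a suitable block size $e$.
\item The first-level skip values take $O(|\CBP| k^2 \log|\CBP|\log n/f)$ bits by the argument of \Cref{subsec:enriched-df}, which is $o(|\CBP|)$ for $\tau=f$ chosen as there.
\end{itemize}
Together these give $o(t_c+p+\ell_c/k^2)$.

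\textbf{Main obstacle: $\Tar$.} This is the step I expect to be hardest. $\Tar$ uses $p\log|\CBP|$ bits, which is not obviously $o(t_c+p+\ell_c/k^2)$. I would handle it by noting that only subtrees whose parenthesis sequence exceeds the threshold (at least 32 symbols, more generally a growing threshold) are pruned. Each pointer therefore replaces $\Omega(\log|\BP|)$ symbols, so its cost can be charged against the removed symbols, or absorbed if the threshold grows as $\omega(\log n)$. I would state this as the assumption under which the lower-order bound holds. $\Len$ can be avoided by computing it on the fly, so it adds nothing.

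\textbf{Time.} To traverse or skip a subtree, we call \findc\ using the two-level scheme, as in \Cref{lemma:bp}.
\begin{itemize}
\item Enriched nodes let us skip subtrees larger than $\tau$ in $O(1)$ per level.
\item Within a small subtree, the block excess summaries give a scan of $O(\tau/e)\subseteq O(\tau)$ steps.
\item On meeting \texttt{(()())}, one constant-time rank gives $v$, and $\Tar[v]$ jumps to the reference subtree, costing $O(1)$.
\item The count of pruned patterns per block keeps the $L'$ offsets correct, since each pruned subtree's leaf count is recovered via $\Len$ or the reference.
\end{itemize}
Hence a subtree traversal costs $O(\tau)$ beyond the output size, which matches \Cref{lemma:bp}.
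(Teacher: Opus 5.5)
Your decomposition matches the paper's. There the lemma is justified only by the preceding discussion:
\begin{itemize}
\item $|\CBP|=2t_c+6p+2\ell_c/k^2$ from the \texttt{(()())} encoding;
\item $L'$ kept at its full length $\ell$;
\item an $o(|\CBP|)$ rank structure for \texttt{(()())};
\item the two-level \findc\ scheme of Section~\ref{subsec:BPeff}, augmented with pruned-pattern counts, plus an $O(1)$ jump through $\Tar$.
\end{itemize}
Your space count and your time argument reproduce this faithfully.

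The gap is the step you flagged yourself, and your proposed fix does not close it. Charging a pointer's $\log|\CBP|$ bits to the symbols it replaces does not give the required bound. If every pruned subtree has more than $T$ symbols, disjointness in $\BP$ gives $p<|\BP|/T$. Hence $p\log|\CBP|=o(|\BP|)$ for $T=\omega(\log n)$. That is a bound in terms of the \emph{uncompressed} size $2t+2\ell/k^2$, which is not part of the budget $o(t_c+p+\ell_c/k^2)$. Also, the fixed threshold $32$ is not $\Omega(\log|\BP|)$ in the first place.

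A growing threshold cannot rescue the statement either. Every pruned root has a surviving parent, so the only general relation is $p\le k^2t_c$, and $p=\Theta(t_c)$ is attainable for any polylogarithmic threshold. For example, take $M$ surviving nodes whose $k^2$ children are copies, in pairwise distinct arrangements, of $R\approx M^{1/k^2}$ reference subtrees of polylogarithmic size. Then $t_c$, $p$ and $|\CBP|$ are all $\Theta(M)$ and $\ell_c/k^2=o(M)$. Yet $\Tar$ occupies $\Theta(M\log M)$ bits, more than even the leading terms $2t_c+6p+2\ell_c/k^2$.

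The paper does not handle this either. It records that $\Tar$ takes $p\log(2t_c+4p+2\ell_c/k^2)$ bits and then simply leaves that term out of the lemma. So what your argument (and the paper's) actually establishes is one of two things:
\begin{itemize}
\item the stated bound plus $p\log|\CBP|$ bits, plus another $p\log\ell$ if $\Len$ is stored rather than recomputed; or
\item the stated bound under a hypothesis on the \emph{number} of pruned subtrees, such as $p\log|\CBP|=o(t_c+\ell_c/k^2)$, rather than on their size.
\end{itemize}
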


% ------------------------------------------
\subsection{Compressed Plain/Enriched Depth-First}
\label{subsec:comp-df}
% ------------------------------------------

Finding repeated subtrees in the Plain Depth-First (PDF) representation is conceptually similar to the BP case. The main difference is that in PDF, leaf values are mixed with internal nodes; hence, when compressing such representations, we are interested in finding {\em maximal identical subtrees}, i.e., subtrees that match completely from their root to their leaves.

Let $P$ be the binary sequence that represents the PDF encoding of the input $k^2$-tree, assuming $k = 2$. To find identical subtrees, we still use the $SA$ and $LCP$ arrays. However, a substring $P[a,b]$ that encodes a subtree $t$ can match another substring $P[c,d]$ that does not represent a subtree, or does not represent the same subtree. In this situation, the subtree $t$ cannot be compressed (that is, a pointer cannot replace it).~\Cref{fig:counterexamplepdf} shows an example of two different $k^2$-trees that have identical PDF representation, thus illustrating how this problem can occur. It is important to note that this also occurs in the canonical representation, and it is a direct consequence of using only 1 bit per node in the $k^2$-tree. With the $BP$ representation, this issue does not occur because the open and closed parentheses allow us to identify the structure of the tree clearly.

\begin{figure}[t]
\begin{subfigure}{1\textwidth}
  \centering
  \includegraphics[width=\textwidth]{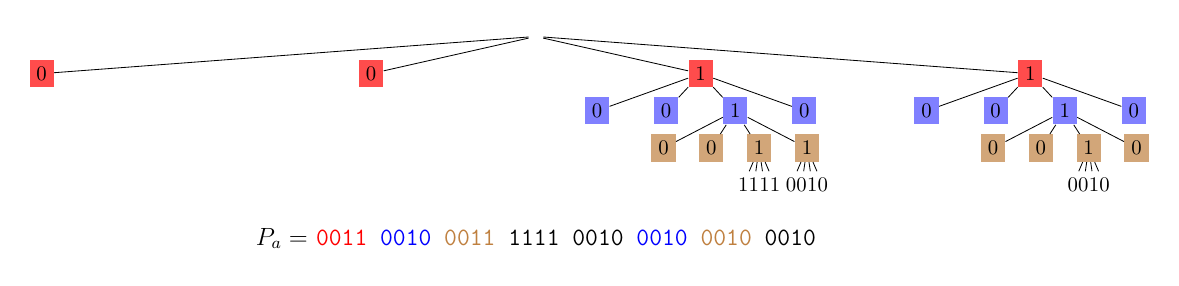}
  \caption{Example of a valid $k^2$-tree for a matrix of size $16\times 16$ and $k=2$.}
  \label{fig:k2pdfa}
\end{subfigure}%

\begin{subfigure}{1\textwidth}
  \centering
  \includegraphics[width=\textwidth]{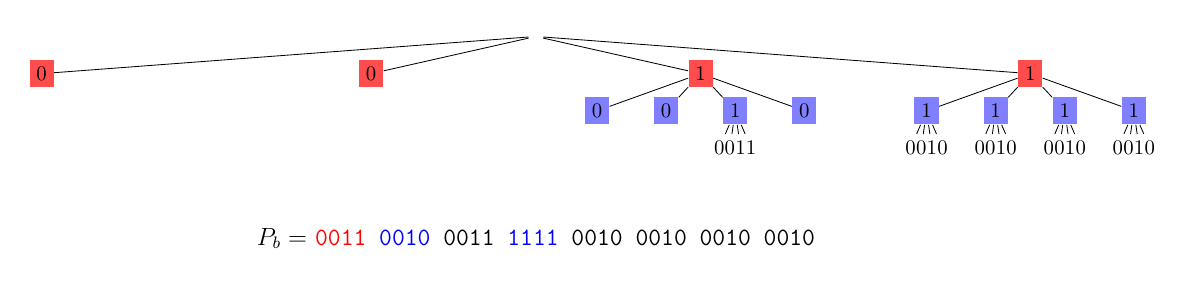}
  \caption{Example of a valid $k^2$-tree for a matrix of size $8\times 8$ and $k=2$.}
  \label{fig:k2pdfb}
\end{subfigure}
\caption{Example of two different $k^2$-tree that have the same PDF representation.}
\label{fig:counterexamplepdf}
\end{figure}

To avoid these spurious matches and detect identical subtrees, we build an auxiliary sequence $P'$, such that $P'[i]$ is a pair containing the depth of node $i$ and the 4-bit block associated with node $i$. For example, for the tree encoding in~\Cref{fig:k2tree-pdft} the sequence $P'$ starts with: {\small{
\texttt{({\color{red}1 1111}) 
({\color{blue} 2 1001}) 
({\color{brown} 3 1101}) 
(4 0100) 
(4 1100) (4 0100) ({\color{brown} 3 1000})}\, \ldots}}

Using the $SA$ and $LCP$ array of $P'$, we can compute the {\em maximal} identical subtrees as in~\Cref{subsec:comp-bp}, by finding equal subsequences $P'[i,i+k]$ and $P'[j,j+k]$  of length at least equal to the number of nodes in the subtree rooted at $P'[i]$. Each subsequence representing a pruned subtree is replaced by the 4-bit block {\tt 0000}, which is a block that cannot appear in the original sequence $P$. As for the BP case, we maintain an integer array $R_c$ storing pointers to the subtree identical to the pruned one. When we reach a node associated with a {\tt 0000} block, to find the corresponding subtree, we only need to count (using a rank data structure) the number of {\tt 0000} blocks up to the current one and read the corresponding pointer from $R_c$ (there is no need for the bit vector $S_c$ since {\tt 0000} can only denote a pruned subtree). 

We call this representation Compressed PDF (CPDF, for short). As in~\Cref{subsec:enriched-df}, we can enrich it by computing the \skipvs on the pruned tree, treating the roots of pruned subtrees as leaves. We store this data separately to speed up navigation in the upper tree levels. The resulting structure is called Compressed EDF (CEDF).

\begin{lemma}\label{lemma:cedf}

    Given a $k^2$-tree of height $h$ with $t$ nodes at levels $0,\dots, h-1$ and $\ell$ leaves at the last level. The respective subtree compressed $k^2$-tree, that has $t_c$ nodes at levels $0,\dots, h-1$, $l_c$ leaves at level $h$, and $p$ pruned subtrees, can be represented using balanced parentheses and be stored using 
    $t_c+4p+\ell_c+ o(t_c+p+\ell_c)$ bits. It supports the traversal of a subtree in $O(\tau)$ time.
\end{lemma}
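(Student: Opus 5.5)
We prove Lemma~\ref{lemma:cedf} by accounting separately for the space of the three components of the CEDF representation and then inheriting the navigation bound from Lemma~\ref{lemma:edf}. The components are the compressed bit vector $P_c$, the pointer array $\Tar$ together with the rank structure for the pattern \texttt{0000}, and the skip-value vector $S$. Although the statement speaks of ``balanced parentheses'', the representation here is the compressed depth-first bit vector $P_c$ of \Cref{subsec:comp-df}, so the count below uses one bit per node rather than parentheses.

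\textbf{Counting the bits of $P_c$.} After pruning, every surviving node at levels $1,\dots,h$ contributes exactly one bit, inside the $k^2$-bit block of its parent. Surviving level-$h$ leaves are counted by $\ell_c$ and surviving upper-level nodes by $t_c$, which gives $t_c+\ell_c$ bits. Each of the $p$ pruned subtrees is replaced by a single \texttt{0000} block of $k^2=4$ bits, contributing $4p$ bits. We must be careful not to double count: the bit of a pruned root in its parent's block is already counted in $t_c$, and the \texttt{0000} block stands in place of that root's own child block. Hence $|P_c| = t_c+4p+\ell_c$.

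\textbf{Lower-order structures.} The rank structure for \texttt{0000} blocks is built exactly like the one for \texttt{(())} in \Cref{subsec:comp-bp}: block prefix counts plus word-level popcounts. It uses $o(|P_c|)=o(t_c+p+\ell_c)$ bits and answers queries in constant time. The skip values are computed on the pruned tree with pruned roots treated as leaves, so the estimate of \Cref{subsec:enriched-df}, applied with $N'=t_c+\ell_c+p$ nodes, bounds their space by $O(N'k^2\log N'\log n/f(N'))$. With $\tau=f(N')=\sqrt{N'}$ and $N'$ not pathologically small (as assumed there), this is $o(t_c+p+\ell_c)$.

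\textbf{Main obstacle: the array $\Tar$.} $\Tar$ stores $p\log|P_c|$ bits, which is not $o(p)$ in general. I would handle it in one of two ways. The first is to argue that $p\log|P_c| = o(t_c+\ell_c)$ because every pruned subtree exceeds the LCP threshold, say $32$ blocks, and so saves $\Omega(\log)$ bits; this makes $p$ small compared with the remaining tree. The second, if that argument does not close, is to state explicitly that the $o(\cdot)$ term absorbs $\Tar$ under this assumption. I would try the first route and fall back on the second.

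\textbf{Traversal time.} Skipping a subtree works as in EDF: $O(1)$ jumps while skip values exist, then a linear scan of a subtree with fewer than $\tau$ nodes. When the scan meets a \texttt{0000} block, we skip it in $O(1)$ time, and following its pointer costs one rank query plus one read of $\Tar$. Therefore every step is $O(1)$ and the total cost is $O(\tau)$, matching Lemma~\ref{lemma:edf}.
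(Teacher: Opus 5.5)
Your count $|P_c|=t_c+4p+\ell_c$ (up to the root's missing bit) follows the construction of \Cref{subsec:comp-df}, as do the $o(|P_c|)$ rank structure for \texttt{0000} blocks, the skip values computed on the pruned tree, and the inherited $O(\tau)$ navigation. The paper gives no separate proof; the lemma only summarizes that subsection. You are also right that ``balanced parentheses'' in the statement is a slip.

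The genuine gap is $R_c$, and your first route for it fails. Every pruned root is itself a surviving node, so $p\le t_c$, and nothing makes $p$ small relative to $t_c+\ell_c$. A lower bound on the size of each pruned subtree bounds $p$ against the \emph{original} size $t+\ell$, not against the compressed size. In any case, a constant threshold such as $32$ blocks cannot pay for $\lceil\log|P_c|\rceil$ pointer bits asymptotically.

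Here is a concrete instance. Take the top $d$ levels complete, and let the $4^d$ subtrees at level $d$ use only $m\approx 2^{(d-1)/2}$ distinct shapes of $s=\Theta(d)>32$ blocks each ($2^{\Omega(s)}$ such shapes exist). Arrange them so that the $4^{d-1}$ subtrees at level $d-1$ are pairwise distinct, which is possible since $m^4\ge 4^{d-1}$. Then:
\begin{itemize}
\item no subtree above level $d$ repeats;
\item every level-$d$ subtree other than the leftmost copy of its shape is a maximal repeat and gets pruned;
\item since $ms=o(4^d)$, you get $p=\Theta(4^d)=\Theta(t_c+\ell_c)$.
\end{itemize}
So $R_c$ alone takes $\Theta((t_c+\ell_c)\log(t_c+\ell_c))$ bits, which is not $o(t_c+p+\ell_c)$.

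Hence the bound cannot be proved as stated, and the paper does not establish it either. In \Cref{subsec:comp-bp} it charges the pointer array with $p\log(2t_c+4p+2\ell_c/k^2)$ bits, but that term silently disappears from both Lemma~\ref{lemma:cbp} and Lemma~\ref{lemma:cedf}. Drop your first route and make the second one explicit. Either state the space as $t_c+4p+\ell_c+p\lceil\log(t_c+4p+\ell_c)\rceil+o(t_c+p+\ell_c)$, or add the hypothesis $p=o\bigl((t_c+\ell_c)/\log(t_c+\ell_c)\bigr)$.

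There is also a smaller issue in your traversal step. The skip values live in the separate vector $S$, so reading $R_c[v]$ as you describe it gives the reference's position in $P_c$ but not in $S$. You also need its $S$-offset, either as a second field per pruned subtree (again $O(\log|P_c|)$ bits each) or via an auxiliary map from enriched nodes to $S$. Without it, navigation inside a large reference degenerates into a scan and the $O(\tau)$ bound is lost.
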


% ------------------------------------------
\section{Experimental evaluation}
\label{sec:experiments}
% ------------------------------------------

To support our theoretical results, we tested all previous data structures (both existing ones and our new proposals) on three sets of binary matrices that differ in type, size, and sparsity. We refer to these datasets as WebGraph, Database, and Random. Overall, we measured disk space usage, execution time and peak memory usage (PMU). The execution time and peak memory usage were measured using the command \texttt{/usr/bin/time}. Experiments were run on an Ubuntu machine with 394 GiB RAM and an Intel Xeon Gold 6132 @ 2.60 GHz.

% ------------------------------------------
\subsection{Datasets}
% ------------------------------------------

\begin{description}
    \item[WebGraph.] This dataset contains matrices extracted from the WebGraph framework~\cite{datasets1,datasets2}, available at \url{https://sparse.tamu.edu/LAW}, and commonly used in related work~\cite{2DBT,j4}. From this collection, we selected seven graphs: \texttt{CNR-2000} (\texttt{CN-2000}), \texttt{AMAZON-2008} (\texttt{AM-2008}), \texttt{EU-2005}, \texttt{IN-2004}, \texttt{INDOCHINA-2004} (\texttt{ID-2008}), \texttt{ARABIC-2005} (\texttt{AR-2008}), and \texttt{UK-2005}.~\Cref{tab:info-webgraph} reports their main characteristics, including matrix size, number of nonzeros, $k^2$-tree (canonical representation) nodes per nonzero, and density.

    \item[Database.] This dataset contains matrices derived from the Wikidata graph database~\cite{k2-tree_vldbj}. The graph is composed of triples of the form entity-relation-entity, which can be represented as a matrix in which cell $(i,j)$ is set to $1$ if, and only if, entity $i$ is connected to entity $j$ via that relation. For instance, considering the ``occupation'' relation, the entity for ``Federico Fellini'' is connected to the entity for ``film director''\footnote{\url{https://www.wikidata.org/wiki/Q7371}}. We observe that these matrices are generally larger and sparser than those from WebGraph. For our experiments, we retained only matrices with a density of at least $10^{-11}$, which we consider the minimum required for meaningful compression in this paper.~\Cref{tab:info-database} reports which matrices were considered and their main characteristics, including matrix size, number of nonzeros, nodes per nonzero, and density.

    \item[Random.] Following the approach of Arroyuelo et al.~\cite{k2-tree_vldbj}, we generated synthetic matrices of size $1{,}000$, and of density equal to $2\cdot 10^{-1}$, $10^{-1}$, $10^{-2}$, $10^{-3}$, and $10^{-4}$. For each density, we generate $10$ different matrices, given a total amount of $50$ matrices of size $1{,}000$. We remark that the $1$s in these matrices are distributed uniformly.

\end{description}

\begin{table}[t]
    \centering
    \caption{Size, number of nonzeros (edges), $k^2$-tree nodes (canonical representation) per nonzero, and density of nonzero elements for the matrices in the WebGraph dataset. The matrices are ordered by matrix size from left to right.}\label{tab:info-webgraph}
    \begin{tabular}{lrrcr}
       \toprule
       File~~ & Matrix size~~ & \#Nonzeros~~ & Nodes/NZs & Density~~\\
       \midrule
       \texttt{CN-2000} & 325,557 & 3,216,152 & 0.87 & $3\cdot 10^{-5}$\\
       % \hline
       \texttt{AM-2008} & 735,323 & 5,158,388 & 2.32 & $9\cdot 10^{-6}$\\
       % \hline
       \texttt{EU-2005} & 862,664 & 19,235,140 & 1.02 & $2\cdot 10^{-5}$\\
       % \hline
       \texttt{IN-2004} & 1,382,908 & 16,917,053 & 0.73 & $8\cdot 10^{-6}$\\
       % \hline
       \texttt{ID-2004} & 7,414,866 & 194,109,311 & 0.60 & $3\cdot 10^{-6}$\\
       % \hline
       \texttt{AR-2005} & 22,744,080 & 639,999,458 & 0.68 & $1\cdot 10^{-6}$\\
       % \hline
       \texttt{UK-2005} & 39,459,925 & 936,364,282 & 0.68 & $6\cdot 10^{-7}$\\
       \bottomrule
    \end{tabular}
\end{table}

%\begin{figure}[h]
%    \centering
%    \includegraphics[scale=.7]{results/database/info.pdf}
%    \caption{Number of nonzeros, $k^2$-tree nodes (canonical representation) per nonzero, and density of nonzero elements for the matrices used in the Graph Database dataset. The matrices are ordered by how were paired for matrix-matrix operations ($M_1 + M_2$ and $M_1 \cdot M_2$), begin top table $M_1$ and bottom table $M_2$.
 %   \label{tab:info-database}}
%\end{figure}

\begin{table}[htbp]
    \centering
    \caption{Number of nonzeros, $k^2$-tree nodes (canonical representation) per nonzero, and density of nonzero elements for the matrices used in the Graph Database dataset. The matrices are ordered by how were paired for matrix-matrix operations ($M_1 + M_2$ and $M_1 \cdot M_2$), begin top table $M_1$ and bottom table $M_2$.
    \label{tab:info-database}}
    \begin{tabular}{lrrr}
        \toprule
       Matrix ID~ & \# Nonzero~~ & Nodes/Nonz & Density~~~\\
       \midrule
       \texttt{0001} & 105,901,917 & 2.80 & $9\cdot 10^{-10}$\\
       \texttt{0004} & 53,883,532 & 8.79 & $4\cdot 10^{-10}$\\
       \texttt{0006} & 53,883,271 & 6.48 & $4\cdot 10^{-10}$\\
       \texttt{0007} & 53,883,271 & 3.71 & $4\cdot 10^{-10}$\\
       \texttt{0008} & 39,609,445 & 10.29 & $3\cdot 10^{-10}$\\
       \texttt{0012} & 49,516,226 & 4.46 & $4\cdot 10^{-10}$\\
       \texttt{0159} & 10,089,284 & 4.54 & $8\cdot 10^{-11}$\\
       \texttt{0606} & 19,961,090 & 10.56 & $2\cdot 10^{-10}$\\
       \texttt{1619} & 19,430,075 & 9.99 & $2\cdot 10^{-10}$\\
       \bottomrule
        & & & \\
        \toprule
        Matrix ID~ & \# Nonzero~~ & Nodes/Nonz & Density~~~ \\
       \midrule
         \texttt{2831} & 78,222,146 & 10.25 & $6\cdot 10^{-10}$\\
         \texttt{3744} & 18,499,841 & 12.52 & $2\cdot 10^{-10}$\\
         \texttt{0889} & 19,341,438 & 11.52 & $2\cdot 10^{-10}$\\
         \texttt{3868} & 14,895,138 & 10.58 & $1\cdot 10^{-10}$  \\
         \texttt{2670} & 166,682,725 & 8.93 & $1\cdot 10^{-9}$ \\
         \texttt{3936} & 16,611,881 & 6.71 & $1\cdot 10^{-10}$ \\
         \texttt{4660} & 17,457,608 & 11.21 & $1\cdot 10^{-10}$ \\
         \texttt{3867} & 14,895,138 & 10.54 & $1\cdot 10^{-10}$ \\
         \texttt{3935} & 18,421,701 & 8.58 & $2\cdot 10^{-10}$ \\
       \bottomrule
    \end{tabular}  
\end{table}

% ------------------------------------------
\subsection{Matrix operations}
% ------------------------------------------

On the different implementations of the $k^2$-trees, we tested the following operations: 

\begin{description}
    \item[Matrix-vector multiplication.] For each matrix in every dataset, we generated a random double-precision vector and multiplied it by the matrix. We report the average execution time over $100$ executions. In this case, we do not report the peak memory usage (PMU) because this operation doesn't require any additional memory; thus, the PMU equals the space of the $k^2$-tree plus the size of the double-precision vector.\\

    \item[Matrix-matrix sum.] For WebGraph matrices, we computed the sum of each matrix and its transpose. For Database matrices, we paired matrices, computed the sum for each pair, and reported the execution time and the PMU. For Random matrices, as with Database, we paired matrices and computed their sum, but report only the average execution time.\\

    \item[Matrix-matrix multiplication.] The algorithm implementing this operation consists of the classical divide-and-conquer recursive procedure: Given two matrices $A$ and $B$, we partition them into four submatrices of equal size
$$
A = \scalebox{0.8}{$\begin{pmatrix}
 A_0 & A_1\\
 A_2 &  A_3
\end{pmatrix}$}, \quad
B = \scalebox{0.8}{$\begin{pmatrix}
 B_0 &  B_1\\
 B_2 &  B_3
\end{pmatrix}$},$$
\noindent then, their product is obtained by the recursive formula:
$$
A \times B = \scalebox{0.8}{$\begin{pmatrix}
  A_0 \times B_0 + A_1 \times B_2 &  A_0 \times B_1 + A_1 \times B_3 \\
  A_2 \times B_0 + A_3 \times B_2 &  A_2 \times B_1 + A_3 \times B_3
\end{pmatrix}$}.
$$
This approach aligns well with $k^2$-trees for $k=2$, where each submatrix maps to one of the 4 children of a node. For WebGraph matrices, we computed the square of each matrix. For Database matrices, we randomly paired the matrices, creating $9$ pairs, and multiplied them. We will report the execution time and peak memory usage for each pair. For Random matrices, we also paired matrices, this time generating $90$ pairs and multiplying them. Still, since they are random matrices, we will report the average execution.\\ 

\ignore{\item[Matrix transitive closure.] The transitive closure of a matrix is implemented as follows: given a matrix $A$, we compute $A \leftarrow A + A \cdot A$. For WebGraph matrices, we did not test this operation because it has no clear application. For Database matrices, this operation is computationally expensive, so we will report the time required to compute the transitive closure of each matrix. For Random matrices, this operation is faster, and so we will take the average over the $10$ available matrices.}

\end{description}

For the Random dataset we don't measure the PMU, because of three main reasons: (1) the matrices are considerably small; (2) because the elements are uniformly distributed, the compression is limited; (3) the main objective of the Random dataset is to evaluate the time execution by varying the density of 1s.

The results of each operation are discussed in the following Sections~\ref{sec:mv-mul}--\ref{sec:mm-mul}.

% ------------------------------------------
\subsection{Implementations}\label{sec:implementation}
% ------------------------------------------

On each dataset, we tested $2$ known variants of the $k^2$-tree:

\begin{itemize}
    \item \texttt{K2-TREE}\footnote{\href{https://github.com/Yhatoh/rpq-matrix}{\texttt{K2-TREE}: https://github.com/Yhatoh/rpq-matrix}}: the canonical level-by-level $k^2$-tree implementation in C by Arroyuelo et al.~\cite{k2-tree_vldbj} (see~\Cref{sec:canonical}). 
    
    \item \texttt{K2-DFUDS}\footnote{\href{https://github.com/Yhatoh/dfs-kdtree}{\texttt{K2-DFUDS}: https://github.com/Yhatoh/dfs-kdtree}}: an implementation based on the depth-first representation described in~\Cref{subsec:dfuds}, we use $b=1024$ as in the original paper. The original implementation in C++ by Fariña et al.~\cite{cache-friendly-boolean} did not include matrix-vector multiplication, so we implemented it. 
\end{itemize}

And our four new proposals (two plain and two compressed):

\begin{itemize}
    
    \item \texttt{K2-EDF-1} and \texttt{K2-CEDF}\footnote{\href{https://github.com/acubeLab/k2tree}{\texttt{K2-EDF-1} \& \texttt{K2-CEDF}: https://github.com/acubeLab/k2tree}}: implementations based on the depth-first representations described in Sections~\ref{subsec:enriched-df} and~\ref{subsec:comp-df}, respectively. For the enriched representations we used six thresholds ($m_{\tau} \sqrt{N}$), where $N$ is the amount of nodes in the $k^2$-tree) for the number of nodes with $m_{\tau}$ equal to: $0.5$, $0.2$, $0.1$, $0.05$, $0.02$, and $0.01$. \texttt{K2-EDF-1} additionally compresses matrices full of $1$’s; we tested the version without this compression, but \texttt{K2-EDF-1} was always better in space and slightly faster. For both implementations, in matrix-matrix multiplication, we tested two versions: \texttt{SUB}, where, when we reach a subtree without enriched nodes, we traverse the tree to retrieve the information, and \texttt{SUB + DYN}, where, when reaching a subtree without enriched nodes, we compute the enriched values dynamically. This implementation was done in C.
    
    \item \texttt{K2-BP} and \texttt{K2-CBP}\footnote{\href{https://github.com/Yhatoh/k2tree}{\texttt{K2-BP} \& \texttt{K2-CBP}: https://github.com/Yhatoh/k2tree}}: implementations in C based on the balanced-parenthesis representations described in Sections~\ref{subsec:BPeff} and~\ref{subsec:comp-bp}. We implemented them by using: (1) $0.2\cdot \sqrt{N}$ as threshold for the subtree size of enriched nodes, where $N$ is the amount of nodes in the tree; (2) $e=256$ for the size of the blocks of the excess information; (3) $b=4096$ for the rank data structure to solve $\text{\texttt{rank}}_{\text{\texttt{(()())}}}$. Both solutions were excluded from the Database dataset because the matrices are considerably larger and much sparser, making these representations non-competitive under any of the three evaluation criteria used in our experiments (i.e., disk space usage, peak memory usage, and execution time).
\end{itemize}

We did not test the solution described in Section~\ref{subsec:plain-df} because it is essentially equivalent to \texttt{K2-EDF-1} with a threshold equal to $N$, where $N$ is the number of nodes in the $k^2$-tree, meaning that no skip values are generated. As a result, it would not provide any speed advantage and would only save a negligible amount of space compared to using $m_\tau = 0.5$. 

Similarly, we did not evaluate the solution from Section~\ref{subsec:BPclas}, which was tested in the conference version of this work~\cite{conf_version}, because it was found that the new implementation of \texttt{K2-BP} required less space and was faster on the same matrices.

% ------------------------------------------
\subsection{Disk space usage}
% ------------------------------------------

Starting with the Webgraph dataset,~\Cref{tab:webgraph_disc_space} reports the disk space usage, in bits per nonzero, for each matrix. First, \texttt{K2-BP} uses the largest number of bits per nonzero in every case. This is expected based on the theoretical size of the BP representation on $k^2$-trees. However, we observe that \texttt{K2-CBP} can considerably reduce space usage through subtree compression. This shows that even if the values in the submatrices differ, the number of repeated subtrees remains significant. It even uses less space than \texttt{K2-TREE} in three datasets: \texttt{ID-2004}, \texttt{AR-2005}, and \texttt{UK-2005}, which are the largest matrices. Larger matrices produce larger $k^2$-trees and therefore increase the chances of achieving more compression. For the remaining matrices, \texttt{K2-CBP} uses more space than other solutions except for \texttt{IN-2004}, where it uses less space than \texttt{K2-DFUDS}. This is because smaller matrices make it harder to achieve sufficient compression to offset the two-times overhead of the BP representation.

\begin{table}[t]
    \centering 
    \caption{Table reports the disk space usage in bits of each experimented data structure per nonzero element of each matrix in the Webgraph dataset. The number in the second column after \texttt{K2-EDF-1} and \texttt{K2-CEDF} represents the different values of $m_\tau$, meaning that the respective solution will compute skip values for subtrees larger than $m_\tau\cdot \sqrt{N}$, where $N$ is the number of nodes in the respective $k^2$-tree. The matrices are ordered by matrix size from left to right.\label{tab:webgraph_disc_space}}
    \begin{tabular}{l l r r r r r r r}
    \toprule
    Solution &  & \texttt{CN-2000} & \texttt{AM-2008} & \texttt{EU-2005} & \texttt{IN-2004} & \texttt{ID-2004} & \texttt{AR-2005} & \texttt{UK-2005} \\
    \midrule
    \texttt{K2-TREE} &  & 3.72 & 9.88 & 4.33 & 3.11 & 2.56 & 2.92 & 2.89 \\
    \midrule
    \texttt{K2-DFUDS} &  & 4.40 & 11.68 & 5.12 & 3.68 & 3.03 & 3.46 & 3.42 \\
    \midrule
    \multirow{6}{*}{\texttt{K2-EDF-1}} & 0.5 & 3.05 & \textbf{9.31} & 3.65 & 2.35 & 1.62 & 2.09 & 2.02 \\
     & 0.2 & 3.07 & 9.35 & 3.66 & 2.37 & 1.63 & 2.09 & 2.02 \\
     & 0.1 & 3.12 & 9.42 & 3.68 & 2.39 & 1.63 & 2.10 & 2.02 \\
     & 0.05 & 3.20 & 9.56 & 3.72 & 2.42 & 1.64 & 2.10 & 2.03 \\
     & 0.02 & 3.41 & 9.90 & 3.83 & 2.51 & 1.67 & 2.12 & 2.04 \\
     & 0.01 & 3.69 & 10.38 & 3.99 & 2.65 & 1.70 & 2.15 & 2.06 \\
    \midrule
    \multirow{6}{*}{\texttt{K2-CEDF}} & 0.5 & \textbf{2.99} & 10.38 & \textbf{3.52} & \textbf{2.22} & \textbf{1.44} & \textbf{1.85} & \textbf{1.80} \\
     & 0.2 & 3.02 & 10.42 & 3.53 & 2.23 & \textbf{1.44} & \textbf{1.85} & \textbf{1.80} \\
     & 0.1 & 3.06 & 10.49 & 3.55 & 2.24 & \textbf{1.44} & \textbf{1.85} & \textbf{1.80} \\
     & 0.05 & 3.13 & 10.62 & 3.58 & 2.27 & 1.45 & \textbf{1.85} & 1.81 \\
     & 0.02 & 3.33 & 10.96 & 3.68 & 2.36 & 1.47 & 1.87 & 1.82 \\
     & 0.01 & 3.60 & 11.43 & 3.82 & 2.47 & 1.50 & 1.89 & 1.83 \\
    \midrule
    \texttt{K2-BP} &  & 7.53 & 20.08 & 8.03 & 5.51 & 3.99 & 4.72 & 4.66 \\
    \midrule
    \texttt{K2-CBP} &  & 5.16 & 14.86 & 5.38 & 3.50 & 2.17 & 2.70 & 2.63 \\
    \bottomrule
  \end{tabular}
\end{table}

\texttt{K2-DFUDS} consistently uses more space than \texttt{K2-TREE}, \texttt{K2-EDF-1}, and \texttt{K2-CEDF}, which is expected from its theoretical space requirements.

For \texttt{K2-EDF-1} and \texttt{K2-CEDF}, we observe that for different values of $\tau$, the space used by skip values does not significantly affect the total space in larger matrices such as \texttt{ID-2004}, \texttt{AR-2005}, and \texttt{UK-2005}. \texttt{K2-CEDF} achieves the best disk usage for every matrix except \texttt{AM-2008}. This indicates that there are compressible subtrees beyond matrices full of $1$’s, and that these contribute significantly to the compression compared to \texttt{K2-EDF-1}. At the same time, \texttt{K2-EDF-1} shows that there is also a considerable number of matrices full of $1$’s. In the case of \texttt{AM-2008}, where \texttt{K2-CEDF} uses more space, this can be explained by a small number of compressible subtrees, making the overhead of auxiliary data structures too large to outperform \texttt{K2-EDF-1}, although it still uses less space than \texttt{K2-TREE}.

\begin{table}[t]
    \centering
    \caption{Table reports the disk space usage in bits of each experimented data structure per nonzero element of each matrix in the Database dataset. The number in the second column after \texttt{K2-EDF-1} and \texttt{K2-CEDF} represents the different values of $m_\tau$, meaning that the respective solution will compute skip values for subtrees larger than $m_\tau\cdot \sqrt{N}$, where $N$ is the number of nodes in the respective $k^2$-tree. For the purpose of readability, \texttt{K2-BP} and \texttt{K2-CBP} were excluded because they were not competitive at all in the Database dataset. The matrices are ordered by how were paired for matrix-matrix operations ($M_1 + M_2$ and $M_1 \cdot M_2$), begin top table $M_1$ and bottom table $M_2$. \label{tab:database_disc_space}}
    \footnotesize
    \begin{tabular}{l l r r r r r r r r r}
    \toprule
    Solution &  & \texttt{0001} & \texttt{0004} & \texttt{0006} & \texttt{0007} & \texttt{0008} & \texttt{0012} & \texttt{0159} & \texttt{0606} & \texttt{1619} \\
    \midrule
    \texttt{K2-TREE} &  & 11.93 & 37.39 & 27.56 & 15.77 & 43.79 & 18.99 & 19.30 & 44.93 & 42.50 \\
    \midrule
    \texttt{K2-DFUDS} &  & 14.11 & 44.22 & 32.60 & 18.65 & 51.79 & 22.46 & 22.82 & 53.14 & 50.26 \\
    \midrule
    \multirow{6}{*}{\texttt{K2-EDF-1}} & 0.5 & 11.22 & 35.17 & 25.93 & 14.83 & 41.20 & 17.87 & 18.17 & 42.27 & 39.99 \\
     & 0.2 & 11.23 & 35.20 & 25.95 & 14.85 & 41.22 & 17.89 & 18.21 & 42.31 & 40.03 \\
     & 0.1 & 11.24 & 35.23 & 25.98 & 14.87 & 41.27 & 17.91 & 18.28 & 42.38 & 40.10 \\
     & 0.05 & 11.26 & 35.31 & 26.03 & 14.91 & 41.37 & 17.97 & 18.39 & 42.52 & 40.23 \\
     & 0.02 & 11.31 & 35.53 & 26.20 & 15.03 & 41.65 & 18.11 & 18.68 & 42.93 & 40.62 \\
     & 0.01 & 11.39 & 35.87 & 26.48 & 15.24 & 42.11 & 18.31 & 19.08 & 43.57 & 41.22 \\
    \midrule
    \multirow{6}{*}{\texttt{K2-CEDF}} & 0.5 & \textbf{5.74} & \textbf{31.16} & \textbf{20.60} & \textbf{10.02} & \textbf{38.00} & \textbf{15.92} & \textbf{17.61} & \textbf{40.25} & \textbf{38.25} \\
     & 0.2 & \textbf{5.74} & 31.17 & 20.61 & \textbf{10.02} & 38.01 & 15.93 & 17.64 & 40.28 & 38.28 \\
     & 0.1 & \textbf{5.74} & 31.20 & 20.63 & 10.03 & 38.04 & 15.94 & 17.69 & 40.32 & 38.32 \\
     & 0.05 & 5.75 & 31.24 & 20.66 & 10.05 & 38.10 & 15.98 & 17.79 & 40.41 & 38.40 \\
     & 0.02 & 5.77 & 31.38 & 20.76 & 10.11 & 38.28 & 16.08 & 18.01 & 40.67 & 38.64 \\
     & 0.01 & 5.80 & 31.60 & 20.91 & 10.21 & 38.56 & 16.22 & 18.34 & 41.09 & 39.03 \\
    \bottomrule
    
     &  &  &  &  &  &  &  &  &  & \\
    \toprule
    Solution &  & \texttt{2831} & \texttt{3744} & \texttt{0889} & \texttt{3868} & \texttt{2670} & \texttt{3936} & \texttt{4660} & \texttt{3867} & \texttt{3935} \\
    \midrule
    \texttt{K2-TREE} &  & 37.97 & 53.25 & 49.02 & 45.00 & 43.59 & 28.54 & 47.69 & 44.84 & 36.50 \\
    \midrule
    \texttt{K2-DFUDS} &  & 44.91 & 62.98 & 57.97 & 53.22 & 51.56 & 33.76 & 56.41 & 53.04 & 43.17 \\
    \midrule
    \multirow{6}{*}{\texttt{K2-EDF-1}} & 0.5 & 35.71 & 50.11 & 46.12 & 42.34 & 41.00 & 26.86 & 44.88 & 42.20 & \textbf{34.34} \\
     & 0.2 & 35.73 & 50.15 & 46.17 & 42.39 & 41.02 & 26.90 & 44.92 & 42.24 & 34.38 \\
     & 0.1 & 35.75 & 50.23 & 46.24 & 42.46 & 41.06 & 26.96 & 44.99 & 42.31 & 34.44 \\
     & 0.05 & 35.79 & 50.38 & 46.39 & 42.61 & 41.13 & 27.07 & 45.13 & 42.46 & 34.57 \\
     & 0.02 & 35.92 & 50.82 & 46.81 & 43.04 & 41.33 & 27.37 & 45.55 & 42.89 & 34.95 \\
     & 0.01 & 36.12 & 51.55 & 47.48 & 43.75 & 41.66 & 27.78 & 46.24 & 43.60 & 35.54 \\
    \midrule
    \multirow{6}{*}{\texttt{K2-CEDF}} & 0.5 & \textbf{32.91} & \textbf{45.72} & \textbf{42.78} & \textbf{39.36} & \textbf{37.87} & \textbf{25.76} & \textbf{41.59} & \textbf{39.20} & 34.52 \\
     & 0.2 & 32.92 & 45.75 & 42.80 & 39.38 & 37.88 & 25.78 & 41.61 & 39.23 & 34.55 \\
     & 0.1 & 32.93 & 45.79 & 42.85 & 39.43 & 37.90 & 25.82 & 41.66 & 39.27 & 34.59 \\
     & 0.05 & 32.96 & 45.87 & 42.94 & 39.51 & 37.94 & 25.91 & 41.74 & 39.36 & 34.69 \\
     & 0.02 & 33.03 & 46.10 & 43.19 & 39.78 & 38.06 & 26.13 & 42.00 & 39.62 & 34.97 \\
     & 0.01 & 33.15 & 46.48 & 43.60 & 40.21 & 38.26 & 26.45 & 42.40 & 40.05 & 35.44 \\
    \bottomrule
  \end{tabular}
\end{table}

Moving to the Database dataset,~\Cref{tab:database_disc_space} reports the disk space usage, in bits per nonzero entry, for each matrix. We observe that \texttt{K2-EDF-1} and \texttt{K2-CEDF} use the least space across all matrices, as in the Webgraph dataset. In particular, \texttt{K2-CEDF} achieves the best compression except for matrix \texttt{3935}, where \texttt{K2-EDF-1} uses slightly less space. This suggests that a considerable portion of the subtrees compressed by \texttt{K2-CEDF} correspond to matrices full of $1$’s. For the remaining matrices, the number of such matrices is smaller, showing that although these matrices are sparse, there is still a high level of repetition, resulting in repeated $k^2$-trees.

\begin{table}[t]
    \centering 
    \caption{Table reports the disk space usage in bits of each experimented data structure per nonzero element of each matrix in the Random dataset. The number in the second column after \texttt{K2-EDF-1} and \texttt{K2-CEDF} represents the different values of $m_\tau$, meaning that the respective solution will compute skip values for subtrees larger than $m_\tau\cdot \sqrt{N}$, where $N$ is the number of nodes in the respective $k^2$-tree. The matrices are ordered from left to right by density.\label{tab:random_disc_space}}
    \begin{tabular}{l l r r r r r}
    \toprule
    Solution &  & $10^{-4}$ & $10^{-3}$ & $10^{-2}$ & $10^{-1}$ & $2\cdot 10^{-1}$ \\
    \midrule
    \texttt{K2-TREE} &  & 34.08 & \textbf{21.30} & 13.46 & 6.72 & 4.88 \\
    \midrule
    \texttt{K2-DFUDS} &  & 36.16 & 24.80 & 15.88 & 7.95 & 5.77 \\
    \midrule
    \multirow{6}{*}{\texttt{K2-EDF-1}} & 0.5 & \textbf{31.56} & 21.50 & \textbf{13.36} & \textbf{6.40} & \textbf{4.66} \\
     & 0.2 & 33.88 & 24.55 & 14.17 & 6.63 & 4.74 \\
     & 0.1 & 33.88 & 26.72 & 15.46 & 7.07 & 5.20 \\
     & 0.05 & 33.88 & 27.22 & 17.51 & 7.56 & 5.21 \\
     & 0.02 & 33.88 & 27.22 & 18.93 & 9.85 & 6.96 \\
     & 0.01 & 33.88 & 27.22 & 18.93 & 10.35 & 7.01 \\
    \midrule
    \multirow{6}{*}{\texttt{K2-CEDF}} & 0.5 & 38.72 & 25.26 & 15.58 & 7.23 & 5.45 \\
     & 0.2 & 41.04 & 28.10 & 16.29 & 7.46 & 5.55 \\
     & 0.1 & 41.04 & 30.42 & 17.67 & 7.88 & 6.00 \\
     & 0.05 & 41.04 & 30.96 & 19.54 & 8.39 & 6.02 \\
     & 0.02 & 41.04 & 30.96 & 21.04 & 10.67 & 7.76 \\
     & 0.01 & 41.04 & 30.96 & 21.04 & 11.18 & 7.83 \\
    \midrule
    \texttt{K2-BP} &  & 1357.12 & 199.78 & 60.57 & 15.16 & 9.99 \\
    \midrule
    \texttt{K2-CBP} &  & 1295.92 & 200.72 & 59.79 & 14.87 & 10.04 \\
    \bottomrule
    \end{tabular}
\end{table}

For the Random dataset,~\Cref{tab:random_disc_space} reports the disk space usage, in bits per nonzero entry, for each matrix. We see that \texttt{K2-BP} and \texttt{K2-CBP} require substantial space for sparse matrices. It is interesting that for densities $10^{-2}$ and $10^{-1}$, \texttt{K2-CBP} uses less space than \texttt{K2-BP}. This shows that when the leaves at the last levels, which contain the actual matrix values, are not considered, different matrices can produce identical trees. \texttt{K2-CEDF} uses more space than \texttt{K2-TREE} but less than \texttt{K2-DFUDS} for denser matrices. This can be explained by two reasons: first, random matrices make it difficult to find repetitions; second, using more space than \texttt{K2-TREE} indicates that although some identical subtrees are found, the overhead of compressing them outweighs the benefit.

On the other hand, \texttt{K2-EDF-1} uses less space for every density except $10^{-3}$. For density $10^{-4}$, the generated $k^2$-tree is small, so there is a low chance of finding matrices full of $1$’s. However, the space overhead of skip values in \texttt{K2-EDF-1} is still more efficient than the rank data structure used in \texttt{K2-TREE}. For density $10^{-3}$, \texttt{K2-TREE} uses slightly less space, as this is a boundary case where the $k^2$-tree starts to grow. In this case, skip values consume more space because more subtrees exceed the thresholds, but no matrices full of $1$’s are found, so no compression is achieved. For denser matrices, the number of matrices full of $1$’s increases, allowing \texttt{K2-EDF-1} to use slightly less space than \texttt{K2-TREE}. Clearly, smaller thresholds use more space, since the matrices are small enough that their contribution is not negligible.

To summarise the disk space results, compressing subtrees can greatly reduce the space usage of $k^2$-trees, and in many cases the final size is smaller than the canonical \texttt{K2-TREE}. This shows that many subtrees are repeated. In very sparse matrices, \texttt{K2-BP} and \texttt{K2-CBP} use much more space. Even if there are repeated subtrees, \texttt{K2-CBP} cannot take advantage of them because its 2$\times$ space overhead is too large. Still, the results confirm that large matrices contain many repeated subtrees.

In the WebGraph dataset, \texttt{K2-EDF-1} achieves strong compression. This is expected because these matrices have strong value centrality, making blocks full of $1$s more common. In the Database matrices, \texttt{K2-EDF-1} compresses less than subtree compression, but it still provides some space savings. Finally, when there are no all-$1$ matrices, the space used by the skip values in \texttt{K2-EDF-1} is smaller than the rank data structure in \texttt{K2-TREE}. The best option for both WebGraph and Database is \texttt{K2-CEDF}, which compresses well thanks to subtree repetition, showing that these matrices generally have repeated subtrees because they contain repeated submatrices.

% ------------------------------------------
\subsection{Matrix-vector multiplication}
\label{sec:mv-mul}
% ------------------------------------------

For matrix-vector multiplication,~\Cref{tab:webgraph_result_mv} shows the average running time for the WebGraph matrices. In this dataset, \texttt{K2-EDF-1} and \texttt{K2-CEDF} are consistently faster than the other solutions, with \texttt{K2-EDF-1} being the fastest overall.

\begin{table}[t]
    \centering
    \caption{Results for matrix-vector multiplication in the WebGraph dataset, reporting the average running time in seconds. The matrices are ordered by size from left to right.\label{tab:webgraph_result_mv}}
    \begin{tabular}{l r r r r r r r}
    \toprule
    Solution & \texttt{\texttt{CN-2000}} & \texttt{\texttt{AM-2008}} & \texttt{\texttt{EU-2005}} & \texttt{\texttt{IN-2004}} & \texttt{\texttt{ID-2004}} & \texttt{\texttt{AR-2005}} & \texttt{\texttt{UK-2005}} \\
     & \textit{time} & \textit{time} & \textit{time} & \textit{time} & \textit{time} & \textit{time} & \textit{time} \\
    \midrule
    \texttt{K2-TREE} & 0.11 & 0.50 & 0.76 & 0.46 & 3.91 & 16.49 & 23.93 \\
    \midrule
    \texttt{K2-DFUDS} & 0.14 & 0.54 & 0.92 & 0.66 & 6.30 & 23.01 & 31.12 \\
    \midrule
    \texttt{K2-EDF-1} & \textbf{0.04} & \textbf{0.24} & \textbf{0.31} & \textbf{0.17} & \textbf{1.29} & \textbf{5.60} & \textbf{8.54} \\
    \midrule
    \texttt{K2-CEDF} & 0.05 & 0.25 & 0.37 & 0.21 & 1.82 & 7.87 & 13.00 \\
    \midrule
    \texttt{K2-BP} & 0.12 & 0.63 & 0.88 & 0.53 & 4.47 & 17.70 & 25.89 \\
    \midrule
    \texttt{K2-CBP} & 0.23 & 1.58 & 1.93 & 1.05 & 9.86 & 40.72 & 67.71 \\
    \bottomrule
    \end{tabular}
\end{table}

For \texttt{K2-CEDF}, the reason is that although we must move from a node to its reference, matrix-vector multiplication only requires a left-to-right scan. The jumps may be costly because they require a rank operation on the pattern \texttt{0000}, but the results show that this overhead is not large enough to make it slower than the other representations. \texttt{K2-EDF-1} is the fastest for the same reason, it also benefits from left-to-right scanning, and it additionally takes advantage of blocks full of $1$s: in those cases, the algorithm can add the value directly without performing extra traversal.

\texttt{K2-DFUDS} is slower than \texttt{K2-TREE}, which is expected. Even though it performs a left-to-right scan, it must decode extra bits in its representation, as described in Section~\ref{subsec:dfuds}.

\texttt{K2-CBP} is clearly the slowest implementation. This is expected because it stores roughly twice as many bits and uses rank operations when jumping between a pruned subtree and its reference. Finally, \texttt{K2-BP} is slower on smaller matrices, but for larger ones, left-to-right scanning helps compensate for the structure's overhead, making it more competitive as matrix size increases.

\begin{table}[t]
    \centering
    \caption{Results for matrix-vector multiplication in the Database dataset, reporting the average running time in seconds. For the purpose of readability, \texttt{K2-BP} and \texttt{K2-CBP} were excluded because they were not competitive at all in the Database dataset. The matrices are ordered according to how they were paired for the matrix–matrix operations ($M_1 + M_2$ and $M_1 \cdot M_2$). The top table corresponds to $M_1$, and the bottom table corresponds to $M_2$.\label{tab:database_result_mv}}
  \begin{tabular}{l r r r r r r r r r}
    \toprule
    Solution & \texttt{0001} & \texttt{0004} & \texttt{0006} & \texttt{0007} & \texttt{0008} & \texttt{0012} & \texttt{0159} & \texttt{0606} & \texttt{1619} \\
     & \textit{time} & \textit{time} & \textit{time} & \textit{time} & \textit{time} & \textit{time} & \textit{time} & \textit{time} & \textit{time} \\
    \midrule
    \texttt{K2-TREE} & 12.89 & 22.75 & 16.07 & 8.63 & 19.64 & 10.47 & 2.80 & 10.70 & 9.73 \\
    \midrule
    \texttt{K2-DFUDS} & 12.18 & 19.32 & 14.00 & 8.01 & 17.20 & 8.88 & 2.00 & 7.99 & 7.44 \\
    \midrule
    \texttt{K2-EDF-1} & \textbf{4.28} & \textbf{12.04} & \textbf{8.02} & \textbf{3.21} & \textbf{11.07} & \textbf{5.09} & \textbf{1.45} & \textbf{5.69} & \textbf{5.46} \\
    \midrule
    \texttt{K2-CEDF} & 6.50 & 23.79 & 15.07 & 5.51 & 21.92 & 7.75 & 1.78 & 10.62 & 9.77 \\
    \bottomrule

     & & & & & & & & & \\
    \toprule
    Solution & \texttt{2831} & \texttt{3744} & \texttt{0889} & \texttt{3868} & \texttt{2670} & \texttt{3936} & \texttt{4660} & \texttt{3867} & \texttt{3935} \\
     & \textit{time} & \textit{time} & \textit{time} & \textit{time} & \textit{time} & \textit{time} & \textit{time} & \textit{time} & \textit{time} \\
    \midrule
    \texttt{K2-TREE} & 69.40 & 11.41 & 11.00 & 8.34 & 36.89 & 5.77 & 9.77 & 8.04 & 8.05 \\
    \midrule
    \texttt{K2-DFUDS} & 60.41 & 9.37 & 9.04 & 6.52 & 31.44 & \textbf{4.21} & 8.05 & 6.44 & 6.22 \\
    \midrule
    \texttt{K2-EDF-1} & \textbf{36.00} & \textbf{6.40} & \textbf{6.04} & \textbf{4.45} & \textbf{22.13} & 4.32 & \textbf{6.80} & \textbf{5.50} & \textbf{5.72} \\
    \midrule
    \texttt{K2-CEDF} & 96.92 & 12.73 & 10.94 & 7.71 & 54.45 & 5.32 & 13.18 & 10.43 & 9.19 \\
    \bottomrule
  \end{tabular}
\end{table}

Furthermore,~\Cref{tab:database_result_mv} reports the average running time for the Database matrices. The results are similar to those from the WebGraph dataset: \texttt{K2-EDF-1} is the fastest solution in almost all cases, except for matrix \texttt{3936} (but for a small amount). This exception can be explained by looking at the disk space usage in~\Cref{tab:database_disc_space}, where the difference between \texttt{K2-TREE} and \texttt{K2-EDF-1} is very small. This indicates that for this matrix, there are almost no all-$1$ submatrices to compress, so \texttt{K2-EDF-1} does not benefit from that advantage.

Additionally,~\Cref{tab:info-database} shows that matrix \texttt{3936} has one of the fewest $1$s and a small number of nodes per non-zero. This may explain why \texttt{K2-DFUDS} performs better: in this specific case, \texttt{K2-DFUDS} may read its bits more efficiently than \texttt{K2-EDF-1}, leading to better cache behaviour.

\begin{table}[t]
    \centering
    \caption{Results for matrix-vector multiplication in the Random dataset, reporting the average running time in seconds. The matrices are ordered by density from left to right.\label{tab:random_result_mv}}
  \begin{tabular}{l r r r r r}
    \toprule
    Solution &  $10^{-4}$ & $10^{-3}$ & $10^{-2}$ & $10^{-1}$ & $2\cdot 10^{-1}$ \\
    & \textit{time} & \textit{time} & \textit{time} & \textit{time} & \textit{time} \\
    \midrule
    \texttt{K2-TREE} & 0.0080 & 0.0370 & 0.2192 & 1.0138 & 1.1155 \\
    \midrule
    \texttt{K2-DFUDS} & \textbf{0.0050} & 0.0350 & 0.2525 & 1.4800 & 2.3575 \\
    \midrule
    \texttt{K2-EDF-1} & 0.0063 & \textbf{0.0219} & \textbf{0.0957} & 0.4562 & \textbf{0.6313} \\
    \midrule
    \texttt{K2-CEDF} & 0.0065 & 0.0233 & 0.1211 & \textbf{0.4446} & 0.6734 \\
    \midrule
    \texttt{K2-BP} & 0.0080 & 0.0403 & 0.2408 & 1.1761 & 1.7663 \\
    \midrule
    \texttt{K2-CBP} & 0.0088 & 0.0451 & 0.2959 & 1.6144 & 2.1699 \\
    \bottomrule
  \end{tabular}
\end{table}

Lastly,~\Cref{tab:random_result_mv} reports the average running time for the Random matrices. The results lead to the same conclusion as before: \texttt{K2-EDF-1} and \texttt{K2-CEDF} are faster when the matrices are denser, while \texttt{K2-DFUDS} performs better when the matrices are sparser. Additionally, this figure shows that \texttt{K2-BP} and \texttt{K2-CBP} are slower on sparse matrices, but \texttt{K2-BP} can be faster than \texttt{K2-DFUDS} as the matrices become denser.

To summarise the results for matrix-vector multiplication, we observe that \texttt{K2-EDF-1} is almost always the best choice, except when the matrix is very sparse, and the tree is small. Even though \texttt{K2-CEDF} must perform a rank operation over the pattern \texttt{0000}, which can negatively impact execution time, it becomes competitive when the number of pruned subtrees is large. This is particularly evident in the Database dataset, where \texttt{K2-CEDF} takes roughly twice as long as \texttt{K2-EDF-1} to compute the multiplication. However, thanks to its depth-first layout, it benefits from left-to-right scanning, allowing it to remain competitive in the WebGraph and Random datasets.

\texttt{K2-TREE} is slower than our solutions, which is expected due to the extra rank operations it performs. However, it can still be faster when the tree is small enough, since the cost of these rank operations becomes negligible compared to the number of bits that must be read when the tree is large (i.e., when the matrix is denser). In those cases, \texttt{K2-TREE} can outperform \texttt{K2-DFUDS}.

% ------------------------------------------
\subsection{Matrix-matrix sum}
\label{sec:mm-sum}
% ------------------------------------------

Before analysing the experimental results, it is important to note that, instead of Matrix-vector multiplication, this operation constructs a $k^2$-tree. Thus, the time required can be affected not only by the operation itself but also by building the final data structure.

\begin{table}[htbp]
    \centering
    \caption{Results for matrix-matrix sum in matrices \texttt{CN-2000}, \texttt{AM-2008}, \texttt{EU-2005} and \texttt{IN-2004} from the WebGraph dataset, reporting the peak memory usage in bits per nonzero element in the matrix and running time in seconds. The matrices are ordered by size from left-right.\label{tab:webgraph_result_sum_small}}
  \begin{tabular}{l r r r r r r r r}
      \toprule
    Solution & \multicolumn{2}{c}{\texttt{CN-2000}} & \multicolumn{2}{c}{\texttt{AM-2008}} & \multicolumn{2}{c}{\texttt{EU-2005}} & \multicolumn{2}{c}{\texttt{IN-2004}} \\
    \cmidrule(lr){2-3} \cmidrule(lr){4-5} \cmidrule(lr){6-7} \cmidrule(lr){8-9}
     & \textit{mem} & \textit{time} & \textit{mem} & \textit{time} & \textit{mem} & \textit{time} & \textit{mem} & \textit{time} \\
    \midrule
    \texttt{K2-TREE} & 13.25 & 0.19 & 24.53 & 0.95 & 11.25 & 1.28 & 10.20 & 0.30 \\
    \midrule
    \texttt{K2-DFUDS} & 22.15 & \textbf{0.08} & 28.14 & \textbf{0.36} & 11.44 & \textbf{0.42} & 9.03 & \textbf{0.26} \\
    \midrule
    \texttt{K2-EDF-1} & \textbf{9.01} & \textbf{0.08} & \textbf{17.52} & 0.44 & \textbf{6.95} & 0.61 & \textbf{4.72} & 0.33 \\
    \midrule
    \texttt{K2-CEDF} & 9.35 & 0.14 & 18.50 & 0.62 & 7.20 & 0.95 & 5.05 & 0.54 \\
    \midrule
    \texttt{K2-BP} & 15.53 & 0.21 & 34.97 & 1.18 & 15.81 & 1.45 & 10.98 & 0.92 \\
    \midrule
    \texttt{K2-CBP} & 14.08 & 0.87 & 33.51 & 5.67 & 14.46 & 7.06 & 9.72 & 3.65 \\
    \bottomrule
  \end{tabular}
\end{table}

\begin{table}[htbp]
    \centering
    \caption{Results for matrix-matrix sum in \texttt{ID-2000}, \texttt{AR-2008} and \texttt{IN-2004} from the WebGraph dataset, reporting the peak memory usage in bits per nonzero element in the matrix and running time in seconds. The matrices are ordered by size from left-right.\label{tab:webgraph_result_sum_big}}
  \begin{tabular}{l r r r r r r}
    & & & & & & \\
    \toprule
    Solution & \multicolumn{2}{c}{\texttt{ID-2004}} & \multicolumn{2}{c}{\texttt{AR-2005}} & \multicolumn{2}{c}{\texttt{UK-2005}} \\
    \cmidrule(lr){2-3} \cmidrule(lr){4-5} \cmidrule(lr){6-7}
     & \textit{mem} & \textit{time} & \textit{mem} & \textit{time} & \textit{mem} & \textit{time} \\
    \cmidrule{1-7}
    \texttt{K2-TREE} & 6.22 & 6.74 & 7.38 & 25.30 & 7.12 & 40.49 \\
    \cmidrule{1-7}
    \texttt{K2-DFUDS} & 3.20 & \textbf{0.19} & 6.28 & \textbf{7.64} & 6.08 & \textbf{12.74} \\
    \cmidrule{1-7}
    \texttt{K2-EDF-1} & \textbf{1.65} & 1.17 & \textbf{3.76} & 10.97 & \textbf{3.51} & 16.32 \\
    \cmidrule{1-7}
    \texttt{K2-CEDF} & 1.94 & 2.24 & 4.07 & 19.01 & 3.91 & 29.26 \\
    \cmidrule{1-7}
    \texttt{K2-BP} & 5.02 & 1.82 & 9.50 & 30.14 & 9.26 & 45.37 \\
    \cmidrule{1-7}
    \texttt{K2-CBP} & 4.13 & 12.04 & 7.54 & 98.35 & 7.28 & 159.24 \\
    \bottomrule
  \end{tabular}
\end{table}

Table~\ref{tab:webgraph_result_sum_small} and~\ref{tab:webgraph_result_sum_big} presents the results for the WebGraph matrices. Beginning with peak memory usage, we observe that, as expected, the two best-performing representations are \texttt{K2-EDF-1} and \texttt{K2-CEDF}. For the smaller matrices (\texttt{CN-2000}, \texttt{AM-2008}, and \texttt{EU-2005}), \texttt{K2-DFUDS}, \texttt{K2-BP}, and \texttt{K2-CBP} require more memory than \texttt{K2-TREE}. The inherent size of each representation explains this (see Lemmas~\ref{lemma:bp}, \ref{lemma:dfuds}, and \ref{lemma:cbp}). For the larger matrices (\texttt{IN-2004}, \texttt{ID-2004}, \texttt{AR-2005}, and \texttt{UK-2005}), the depth-first representations become more memory-efficient than \texttt{K2-TREE}. The reason is that, under a depth-first layout, subtrees that must be inserted into the result can be copied directly, whereas \texttt{K2-TREE} requires an auxiliary queue to store intermediate results before merging. This additional structure introduces overhead and may nearly double the memory required for the resulting tree. This also accounts for the lower peak memory usage observed in \texttt{K2-EDF-1} and \texttt{K2-CEDF}. Finally, between these two latter representations, \texttt{K2-EDF-1} achieves the smallest memory footprint. Since the WebGraph matrices have the property of centrality, in this case the resulting $k^2$-tree is denser and contains more all-$1$ submatrices; in \texttt{K2-EDF-1} these matrices are compressed into a single node, whereas \texttt{K2-CEDF} stores the result without compression, producing a larger tree and therefore higher memory usage.

Regarding execution time, \texttt{K2-DFUDS}, \texttt{K2-EDF-1}, and \texttt{K2-CEDF} all outperform \texttt{K2-TREE}. This is primarily due to the fact that the depth-first layout reduces the sum operation to a simple left-to-right scan and also enables faster construction of the resulting tree. \texttt{K2-DFUDS} is consistently faster than the other two. This can be attributed to the case where an all-$0$ submatrix is added to a non-all-$0$ submatrix: the algorithm must copy the entire $k^2$-tree of the latter into the output, and \texttt{K2-DFUDS} implements this step more efficiently. Finally, \texttt{K2-BP} and \texttt{K2-CBP} are slower than the other representations, largely because they require copying more bits under the same conditions, and in the case of \texttt{K2-CBP}, additional rank operations are needed to move from a pruned subtree to its reference.

\begin{table}[t]
  \centering
  \caption{Results for matrix-matrix sum for the matrices of the Database dataset indicated in the top row. The column $M_1+M_2$ means the result is the sum of matrices $M_1$ and $M_2$. We report the peak memory usage (in bits per nonzero element in the matrix) and the running time (in seconds). For the purpose of readability, \texttt{K2-BP} and \texttt{K2-CBP} were excluded because they were not competitive at all in the Database dataset. The matrices are ordered as in Table~\ref{tab:info-database} from left to right.\label{tab:database_result_sum}}
  \footnotesize
  \begin{tabular}{l r r r r r r r r r r}
      \toprule
    Solution & \multicolumn{2}{c}{\texttt{0001}$+$\texttt{2831}} & \multicolumn{2}{c}{\texttt{0004}$+$\texttt{3744}} & \multicolumn{2}{c}{\texttt{0006}$+$\texttt{0889}} & \multicolumn{2}{c}{\texttt{0007}$+$3868} & \multicolumn{2}{c}{\texttt{0008}$+$\texttt{2670}} \\
    \cmidrule(lr){2-3} \cmidrule(lr){4-5} \cmidrule(lr){6-7} \cmidrule(lr){8-9} \cmidrule(lr){10-11}
     & \textit{mem} & \textit{time} & \textit{mem} & \textit{time} & \textit{mem} & \textit{time} & \textit{mem} & \textit{time} & \textit{mem} & \textit{time} \\
    \midrule
    \texttt{K2-TREE} & 60.11 & 6.15 & 93.70 & 3.15 & 78.63 & 3.00 & 56.30 & 2.43 & 122.22 & 10.67 \\
    \midrule
    \texttt{K2-DFUDS} & 64.68 & \textbf{2.81} & 96.62 & \textbf{2.28} & 77.57 & \textbf{1.98} & 51.25 & \textbf{1.96} & 100.57 & \textbf{6.09} \\
    \midrule
    \texttt{K2-EDF-1} & 51.24 & 48.26 & 76.11 & 17.16 & 60.98 & 13.51 & 39.99 & 7.08 & 79.52 & 44.56 \\
    \midrule
    \texttt{K2-CEDF} & \textbf{47.50} & 121.36 & \textbf{72.11} & 42.83 & \textbf{56.31} & 33.43 & \textbf{35.70} & 14.06 & \textbf{76.44} & 100.81 \\
    \midrule \\ \ctoprule{1-9}
    Solution & \multicolumn{2}{c}{\texttt{0012}$+$\texttt{3936}} & \multicolumn{2}{c}{\texttt{0159}$+$\texttt{4660}} & \multicolumn{2}{c}{\texttt{0606}$+$\texttt{3867}} & \multicolumn{2}{c}{\texttt{1619}$+$\texttt{3935}} \\
    \cmidrule(lr){2-3} \cmidrule(lr){4-5} \cmidrule(lr){6-7} \cmidrule(lr){8-9}
    & \textit{mem} & \textit{time} & \textit{mem} & \textit{time} & \textit{mem} & \textit{time} & \textit{mem} & \textit{time} \\
    \cmidrule(lr){1-9}
    \texttt{K2-TREE} & 45.69 & 1.14 & 81.05 & 0.82 & 92.19 & 0.91 & 82.72 & 1.02 \\
    \cmidrule(lr){1-9}
    \texttt{K2-DFUDS} & 50.76 & \textbf{0.78} & 89.60 & \textbf{0.40} & 106.44 & \textbf{0.59} & 93.86 & \textbf{0.72} \\
    \cmidrule(lr){1-9}
    \texttt{K2-EDF-1} & 39.59 & 7.34 & 69.34 & 8.29 & 83.16 & 9.52 & 73.27 & 8.84 \\
    \cmidrule(lr){1-9}
    \texttt{K2-CEDF} & \textbf{37.90} & 12.56 & \textbf{67.14} & 15.38 & \textbf{80.75} & 22.67 & \textbf{72.51} & 16.84 \\
    \cbottomrule{1-9}
  \end{tabular}
\end{table}

Continuing with the Database dataset, Table~\ref{tab:database_result_sum} presents the corresponding results. These results largely reinforce the previous analysis: \texttt{K2-EDF-1} and \texttt{K2-CEDF} are significantly slower than \texttt{K2-TREE} and \texttt{K2-DFUDS}. This behavior is directly related to the size of the resulting $k^2$-trees. When comparing the number of tree nodes per nonzero entry in the WebGraph matrices (\Cref{tab:info-webgraph}) with those in the Database dataset (\Cref{tab:info-database}), we observe that the $k^2$-trees derived from the Database matrices are considerably larger, even though the matrices are sparse. Consequently, whenever the algorithm must copy an entire $k^2$-tree, \texttt{K2-EDF-1} and \texttt{K2-CEDF} require substantially more time.

In terms of peak memory usage, \texttt{K2-EDF-1} and \texttt{K2-CEDF} also exhibit the lowest memory consumption, further reinforcing the behavior previously discussed.

\begin{table}[t]
    \centering
    \caption{Results for matrix-matrix sum in the Random dataset, average running time in seconds. Each column represents the results over matrices with that density. The matrices are ordered by density from left-to right.\label{tab:random_result_sum}}
  \begin{tabular}{l r r r r r}
    \toprule
    Solution & $10^{-4}$ & $10^{-3}$ & $10^{-2}$ & $10^{-1}$ & $2\cdot 10^{-1}$ \\
    \midrule
    \texttt{K2-TREE} & 0.0044 & \textbf{0.0046} & \textbf{0.0059} & 0.0162 & 0.0203 \\
    \midrule
    \texttt{K2-DFUDS} & 0.0549 & 0.0592 & 0.0252 & 0.0237 & 0.0189 \\
    \midrule
    \texttt{K2-EDF-1} & 0.0049 & 0.0051 & 0.0065 & \textbf{0.0149} & \textbf{0.0166} \\
    \midrule
    \texttt{K2-CEDF} & 0.0050 & 0.0053 & 0.0073 & 0.0184 & 0.0227 \\
    \midrule
    \texttt{K2-BP} & 0.0039 & \textbf{0.0046} & 0.0114 & 0.0394 & 0.0509 \\
    \midrule
    \texttt{K2-CBP} & \textbf{0.0037} & 0.0048 & 0.0125 & 0.0515 & 0.0626 \\
    \bottomrule
  \end{tabular}
\end{table}

Finally, for the Random dataset,~\Cref{tab:random_result_sum} shows the results. Surprisingly, \texttt{K2-BP} and \texttt{K2-CBP} are faster than the other solutions for the sparser matrices. This happens because these matrices contain many submatrices full of $0$s. In such sparse cases, a fast subtree copy operation is important, because there are high chances to fell in the case where an all-$0$ submatrix is added to a non-all-$0$ submatrix. The implementation benefits from the excess array techniques, which lets us copy groups of bits while still allowing traversal without decompressing them bit by bit. But also the matrices are also smaller, so the number of bits to copy is not large.

Apart from this specific situation, we see that \texttt{K2-TREE} is faster for sparser matrices, while \texttt{K2-EDF-1} is faster for denser matrices, with \texttt{K2-EDF-1} being surely the best overall. This is because, in denser matrices, a fast copy function is less important, and the operations require traversing the whole tree to reach the last level and apply the sum operation.

To summarise the results for this operation, \texttt{K2-EDF-1} and \texttt{K2-CEDF} achieve the smallest peak memory usage among all solutions. This is due to the depth-first layout, which allows subtrees to be copied directly into the result while using fewer bits. However, in terms of running time, both \texttt{K2-EDF-1} and \texttt{K2-CEDF} can become slower when the algorithm must copy the full structure of a subtree, for example, when summing an all-$0$ matrix with a matrix that is not all-$0$. As a result, these methods perform better on denser matrices. \texttt{K2-DFUDS} is the fastest solution when the matrices are large and sparse, due to its more efficient implementation of the subtree-copying step mentioned above. In contrast, \texttt{K2-TREE} can be faster when the matrices are smaller and sparser.

% ------------------------------------------
\subsection{Matrix-matrix multiplication}
\label{sec:mm-mul}
% ------------------------------------------

Before analyzing the experimental results, it is important to recall the behavior discussed at the beginning of Section~\ref{sec:mm-sum}: in this operation, performance depends not only on the running time of the algorithm but also on the cost of building the resulting $k^2$-tree.

\begin{table}[t]
    \centering
    \caption{Results for matrix-matrix multiplication for matrices \texttt{CN-2000}, \texttt{AM-2008}, \texttt{EU-2005} and \texttt{IN-2004} from the WebGraph dataset, reporting the peak memory usage in bits per nonzero element and the running time in seconds. The number in the second column after \texttt{K2-EDF-1} and \texttt{K2-CEDF} represents the different values of $m_\tau$, meaning that the respective solution will compute skip values for subtrees larger than $m_\tau\cdot \sqrt{N}$, where $N$ is the number of nodes in the respective $k^2$-tree. The matrices are ordered by matrix size from left to right.\label{tab:webgraph_result_mul_small}}
    \footnotesize
  \begin{tabular}{l l l r r r r r r r r}
      \toprule
    Solution &  &  & \multicolumn{2}{c}{\texttt{CN-2000}} & \multicolumn{2}{c}{\texttt{AM-2008}} & \multicolumn{2}{c}{\texttt{EU-2005}} & \multicolumn{2}{c}{\texttt{IN-2004}} \\
    \cmidrule(lr){4-5} \cmidrule(lr){6-7} \cmidrule(lr){8-9} \cmidrule(lr){10-11}
     &  &  & \textit{mem} & \textit{time} & \textit{mem} & \textit{time} & \textit{mem} & \textit{time} & \textit{mem} & \textit{time} \\
    \midrule
    \texttt{K2-TREE} &  &  & 49.07 & 0.14 & 61.71 & \textbf{1.26} & 58.45 & 2.53 & 40.65 & 1.12 \\
    \midrule
    \texttt{K2-DFUDS} &  &  & 61.19 & \textbf{0.09} & 72.26 & 1.29 & 68.93 & \textbf{1.91} & 48.50 & \textbf{0.55} \\
    \midrule
    \multirow{12}{*}{\texttt{K2-EDF-1}} & \multirow{2}{*}{0.5} & \texttt{SUB} & \textbf{28.09} & 0.20 & \textbf{41.00} & 6.32 & 37.81 & 6.32 & 23.82 & 0.94 \\
     &  & \texttt{SUB + DYN} & 29.51 & 0.15 & 41.53 & 2.90 & \textbf{37.63} & 3.67 & 23.97 & 0.66 \\
     & \multirow{2}{*}{0.2} & \texttt{SUB} & 28.29 & 0.20 & 41.26 & 5.76 & 37.91 & 5.76 & 23.92 & 0.89 \\
     &  & \texttt{SUB + DYN} & 28.67 & 0.16 & 41.38 & 3.14 & 37.68 & 3.74 & \textbf{22.92} & 0.68 \\
     & \multirow{2}{*}{0.1} & \texttt{SUB} & 28.64 & 0.18 & 41.80 & 5.31 & 38.08 & 5.26 & 24.05 & 0.83 \\
     &  & \texttt{SUB + DYN} & 31.24 & 0.15 & 42.03 & 3.29 & 37.66 & 3.74 & 23.09 & 0.67 \\
     & \multirow{2}{*}{0.05} & \texttt{SUB} & 29.28 & 0.17 & 42.92 & 4.73 & 38.41 & 4.82 & 24.33 & 0.79 \\
     &  & \texttt{SUB + DYN} & 30.31 & 0.15 & 43.08 & 3.47 & 37.97 & 3.72 & 23.37 & 0.67 \\
     & \multirow{2}{*}{0.02} & \texttt{SUB} & 30.94 & 0.15 & 45.52 & 3.77 & 39.25 & 4.25 & 25.06 & 0.73 \\
     &  & \texttt{SUB + DYN} & 30.96 & 0.15 & 46.11 & 3.36 & 38.82 & 3.64 & 25.11 & 0.66 \\
     & \multirow{2}{*}{0.01} & \texttt{SUB} & 33.18 & 0.15 & 49.28 & 3.12 & 40.53 & 3.94 & 26.15 & 0.70 \\
     &  & \texttt{SUB + DYN} & 33.30 & 0.15 & 49.22 & 3.11 & 40.47 & 3.71 & 25.18 & 0.68 \\
    \midrule
    \multirow{12}{*}{\texttt{K2-CEDF}} & \multirow{2}{*}{0.5} & \texttt{SUB} & 32.80 & 0.21 & 42.49 & 6.22 & 37.86 & 5.98 & 27.47 & 1.04 \\
     &  & \texttt{SUB + DYN} & 35.22 & 0.16 & 42.73 & 3.08 & 38.96 & 3.90 & 27.85 & 0.82 \\
     & \multirow{2}{*}{0.2} & \texttt{SUB} & 32.72 & 0.21 & 42.89 & 5.73 & 37.92 & 5.57 & 27.54 & 1.00 \\
     &  & \texttt{SUB + DYN} & 32.59 & 0.17 & 43.13 & 3.27 & 37.64 & 3.97 & 27.69 & 0.84 \\
     & \multirow{2}{*}{0.1} & \texttt{SUB} & 33.06 & 0.19 & 43.33 & 5.28 & 38.09 & 5.11 & 27.67 & 0.96 \\
     &  & \texttt{SUB + DYN} & 33.96 & 0.17 & 42.52 & 3.43 & 37.87 & 3.97 & 28.06 & 0.83 \\
     & \multirow{2}{*}{0.05} & \texttt{SUB} & 33.53 & 0.18 & 44.44 & 4.76 & 38.37 & 4.77 & 27.88 & 0.92 \\
     &  & \texttt{SUB + DYN} & 33.65 & 0.17 & 44.03 & 3.60 & 38.44 & 3.96 & 28.32 & 0.83 \\
     & \multirow{2}{*}{0.02} & \texttt{SUB} & 35.09 & 0.16 & 47.05 & 3.89 & 39.11 & 4.29 & 28.52 & 0.86 \\
     &  & \texttt{SUB + DYN} & 35.11 & 0.16 & 47.33 & 3.50 & 39.31 & 3.89 & 28.54 & 0.83 \\
     & \multirow{2}{*}{0.01} & \texttt{SUB} & 37.19 & 0.16 & 50.72 & 3.30 & 40.20 & 4.08 & 29.42 & 0.85 \\
     &  & \texttt{SUB + DYN} & 37.19 & 0.17 & 50.48 & 3.30 & 40.20 & 3.93 & 29.44 & 0.85 \\
    \midrule
    \texttt{K2-BP} &  &  & 78.13 & 0.54 & 248.92 & 13.35 & 134.60 & 13.93 & 74.85 & 2.70 \\
    \midrule
    \texttt{K2-CBP} &  &  & 83.27 & 1.03 & 226.38 & 26.45 & 132.87 & 26.82 & 75.39 & 5.05 \\
    \bottomrule
  \end{tabular}
\end{table}
\begin{table}[htbp]
    \centering
    \caption{Results for matrix-matrix multiplication for matrices \texttt{ID-2000}, \texttt{AR-2008} and \texttt{UK-2004} from the WebGraph dataset, reporting the peak memory usage in bits per nonzero element and the running time in seconds. The number in the second column after \texttt{K2-EDF-1} and \texttt{K2-CEDF} represents the different values of $m_\tau$, meaning that the respective solution will compute skip values for subtrees larger than $m_\tau\cdot \sqrt{N}$, where $N$ is the number of nodes in the respective $k^2$-tree. The matrices are ordered by matrix size from left to right.\label{tab:webgraph_result_mul_big}}
  \begin{tabular}{l l l r r r r r r}
    \toprule
    Solution &  &  & \multicolumn{2}{c}{\texttt{ID-2004}} & \multicolumn{2}{c}{\texttt{AR-2005}} & \multicolumn{2}{c}{\texttt{UK-2005}} \\
    \cmidrule(lr){4-5} \cmidrule(lr){6-7} \cmidrule(lr){8-9}
     &  &  & \textit{mem} & \textit{time} & \textit{mem} & \textit{time} & \textit{mem} & \textit{time} \\
    \midrule
    \texttt{K2-TREE} &  &  & 28.37 & 124.83 & 37.32 & 84.11 & 25.93 & 88.87 \\
    \midrule
    \texttt{K2-DFUDS} &  &  & 33.80 & 38.91 & 44.84 & \textbf{40.96} & 29.98 & \textbf{68.52} \\
    \midrule
    \multirow{12}{*}{\texttt{K2-EDF-1}} & \multirow{2}{*}{0.5} & \texttt{SUB} & 16.17 & 14.25 & \textbf{20.79} & 104.05 & 12.50 & 332.43 \\
     &  & \texttt{SUB + DYN} & \textbf{16.12} & 9.07 & \textbf{20.79} & 53.47 & \textbf{12.44} & 127.37 \\
     & \multirow{2}{*}{0.2} & \texttt{SUB} & 16.19 & 13.66 & 20.81 & 94.08 & 12.51 & 293.71 \\
     &  & \texttt{SUB + DYN} & 16.18 & 9.37 & 20.86 & 53.55 & \textbf{12.44} & 126.44 \\
     & \multirow{2}{*}{0.1} & \texttt{SUB} & 16.23 & 12.99 & 20.83 & 85.75 & 12.53 & 282.70 \\
     &  & \texttt{SUB + DYN} & 16.24 & 9.42 & 20.89 & 54.05 & 12.47 & 130.32 \\
     & \multirow{2}{*}{0.05} & \texttt{SUB} & 16.30 & 12.27 & 20.88 & 77.84 & 12.57 & 265.31 \\
     &  & \texttt{SUB + DYN} & 16.29 & 9.44 & 20.91 & 53.65 & 12.50 & 141.55 \\
     & \multirow{2}{*}{0.02} & \texttt{SUB} & 16.50 & 10.72 & 21.01 & 71.12 & 12.68 & 237.86 \\
     &  & \texttt{SUB + DYN} & 16.51 & \textbf{9.03} & 21.02 & 53.50 & 12.63 & 151.51 \\
     & \multirow{2}{*}{0.01} & \texttt{SUB} & 16.79 & 10.63 & 21.21 & 65.13 & 12.85 & 209.48 \\
     &  & \texttt{SUB + DYN} & 16.75 & 9.66 & 21.21 & 53.24 & 12.85 & 144.99 \\
    \midrule
    \multirow{12}{*}{\texttt{K2-CEDF}} & \multirow{2}{*}{0.5} & \texttt{SUB} & 19.42 & 29.36 & 24.27 & 96.81 & 15.76 & 275.17 \\
     &  & \texttt{SUB + DYN} & 19.50 & 25.74 & 24.28 & 64.01 & 15.78 & 156.28 \\
     & \multirow{2}{*}{0.2} & \texttt{SUB} & 19.44 & 29.85 & 24.28 & 90.44 & 15.77 & 258.70 \\
     &  & \texttt{SUB + DYN} & 19.33 & 26.78 & 24.30 & 65.87 & 15.78 & 151.63 \\
     & \multirow{2}{*}{0.1} & \texttt{SUB} & 19.47 & 29.70 & 24.30 & 85.23 & 15.79 & 243.52 \\
     &  & \texttt{SUB + DYN} & 19.47 & 27.38 & 24.32 & 65.07 & 15.79 & 161.94 \\
     & \multirow{2}{*}{0.05} & \texttt{SUB} & 19.53 & 29.20 & 24.34 & 78.43 & 15.82 & 232.26 \\
     &  & \texttt{SUB + DYN} & 19.53 & 27.84 & 24.50 & 63.94 & 15.83 & 165.07 \\
     & \multirow{2}{*}{0.02} & \texttt{SUB} & 19.69 & 26.91 & 24.44 & 74.89 & 15.90 & 217.09 \\
     &  & \texttt{SUB + DYN} & 19.76 & 25.87 & 24.44 & 64.61 & 15.90 & 168.23 \\
     & \multirow{2}{*}{0.01} & \texttt{SUB} & 19.94 & 28.22 & 24.61 & 71.72 & 16.03 & 206.43 \\
     &  & \texttt{SUB + DYN} & 19.94 & 27.49 & 24.71 & 65.38 & 16.04 & 169.20 \\
    \midrule
    \texttt{K2-BP} &  &  & 53.37 & 111.32 & 77.15 & 215.50 & 64.21 & 642.67 \\
    \midrule
    \texttt{K2-CBP} &  &  & 52.13 & 146.15 & 76.34 & 391.08 & 63.04 & 1169.56 \\
    \bottomrule
  \end{tabular}
\end{table}

Tables~\ref{tab:webgraph_result_mul_small} and~\ref{tab:webgraph_result_mul_big} reports the results for the WebGraph dataset. First, \texttt{K2-BP} and \texttt{K2-CBP} are considerably slower than the other solutions. This can be explained by several factors: the number of bits that must be traversed is already twice as large as in the other representations, and when one operand is an all-$0$s matrix while the other is not, the algorithm must skip the entire second matrix. Although our representation mitigates this effect to some extent, the overhead remains substantial. Additionally, the resulting $k^2$-tree can become significantly larger than in the other solutions due to the increased bit requirements. Even so, \texttt{K2-BP} manages to be slightly faster than \texttt{K2-TREE} in \texttt{ID-2004}, suggesting that the problematic scenario described above occurs less frequently for that matrix.

\texttt{K2-DFUDS} is the fastest solution in almost all cases, except for \texttt{ID-2004}, where all variants of \texttt{K2-EDF-1} achieve better times---particularly the variant with $m_\tau = 0.2$ and dynamic skip-value computation. Across all matrices, the \texttt{SUB+DYN} variants of \texttt{K2-EDF-1} consistently outperform the \texttt{SUB}-only variants, except on small matrices and small $m_\tau$ values, where the dynamic computation introduces non-negligible overhead. For small matrices, the performance gap between \texttt{SUB} and \texttt{SUB+DYN} narrows as $m_\tau$ increases, indicating that the algorithm processes small subtrees directly rather than skipping them.

For \texttt{IN-2004}, \texttt{ID-2004}, and \texttt{AR-2005}, \texttt{K2-EDF-1} outperforms \texttt{K2-TREE}. Observing that performance differences between $m_\tau=0.01$ and $m_\tau=0.5$ (and between \texttt{SUB} and \texttt{SUB+DYN}) are small for these matrices, we conclude that the algorithm is forced to traverse most subtrees instead of skipping them. As an illustrative example, in \texttt{UK-2005} the execution time of \texttt{SUB+DYN} is nearly half that of \texttt{SUB}, showing that dynamic skip computation is particularly beneficial when large subtrees can be skipped. A similar trend holds for \texttt{K2-CEDF}, although this solution is slower than \texttt{K2-EDF-1} due to the additional rank operations required to move from a pruned subtree to its reference.

Regarding peak memory usage, \texttt{K2-EDF-1} and \texttt{K2-CEDF} are the most space-efficient across all solutions. As in the matrix--matrix sum, this is because the result is built by scanning from left to right, while \texttt{K2-TREE} requires an additional queue during merging. In contrast, \texttt{K2-DFUDS} has higher peak memory usage than \texttt{K2-TREE}, due to the larger number of bits required by the DFUDS representation and the overhead incurred during the ``conquer'' phase of the algorithm. Between the two EDF variants, \texttt{K2-CEDF} uses more space because the resulting tree is uncompressed, whereas \texttt{K2-EDF-1} compresses all-$1$s matrices. Finally, \texttt{K2-BP} and \texttt{K2-CBP} present the highest memory usage of all solutions, not only because their resulting $k^2$-trees are twice as large but also because \texttt{K2-CBP} computes auxiliary information on the fly to avoid redundant traversals, increasing memory overhead.

\begin{table}[htbp]
    \centering 
    \caption{Results for Matrix-matrix sum for \texttt{0001}, \texttt{2831}, \texttt{0004}, \texttt{3744}, \texttt{0006} and \texttt{0889} from the Database dataset, reporting the peak memory usage in bits per nonzero element in the matrix and running time in minutes. Column $M_1\times M_2$ means that the result is the multiplication of $M_1$ and $M_2$. The number in the second column after \texttt{K2-EDF-1} and \texttt{K2-CEDF} represents the different values of $m_\tau$, meaning that the respective solution will compute skip values for subtrees larger than $m_\tau\cdot \sqrt{N}$, where $N$ is the number of nodes in the respective $k^2$-tree. For the purpose of readability, \texttt{K2-BP} and \texttt{K2-CBP} were excluded because they were not competitive at all in the Database dataset. The matrices are ordered as in Table~\ref{tab:info-database}.\label{tab:database_result_mul_1}}
    \begin{tabular}{l l l r r r r r r}
    \toprule
    Solution &  &  & \multicolumn{2}{c}{\texttt{0001}$\times$\texttt{2831}} & \multicolumn{2}{c}{\texttt{0004}$\times$\texttt{3744}} & \multicolumn{2}{c}{\texttt{0006}$\times$\texttt{0889}} \\
    \cmidrule(lr){4-5} \cmidrule(lr){6-7} \cmidrule(lr){8-9}
     &  &  & \textit{mem} & \textit{time} & \textit{mem} & \textit{time} & \textit{mem} & \textit{time} \\
    \midrule
    \texttt{K2-TREE} &  &  & 70.27 & \textbf{0.03} & 54.89 & \textbf{57.30} & 87.86 & \textbf{62.68} \\
    \midrule
    \texttt{K2-DFUDS} &  &  & 83.68 & 0.06 & 66.06 & 76.81 & 95.73 & 77.31 \\
    \midrule
    \multirow{12}{*}{\texttt{K2-EDF-1}} & \multirow{2}{*}{0.5} & \texttt{SUB} & 66.22 & 1.94 & 51.79 & 856.71 & 73.69 & 702.83 \\
     &  & \texttt{SUB + DYN} & 66.22 & 0.63 & 51.80 & 208.58 & 73.70 & 192.01 \\
     & \multirow{2}{*}{0.2} & \texttt{SUB} & 66.42 & 1.15 & 52.11 & 777.31 & 73.94 & 652.07 \\
     &  & \texttt{SUB + DYN} & 66.42 & 0.50 & 52.11 & 232.53 & 73.99 & 214.77 \\
     & \multirow{2}{*}{0.1} & \texttt{SUB} & 66.76 & 0.71 & 52.61 & 596.54 & 74.38 & 668.27 \\
     &  & \texttt{SUB + DYN} & 66.75 & 0.46 & 52.61 & 244.38 & 74.39 & 230.93 \\
     & \multirow{2}{*}{0.05} & \texttt{SUB} & 67.43 & 0.39 & 53.59 & 535.46 & 75.24 & 497.44 \\
     &  & \texttt{SUB + DYN} & 67.44 & 0.25 & 53.59 & 250.78 & 75.26 & 243.95 \\
     & \multirow{2}{*}{0.02} & \texttt{SUB} & 69.37 & 0.12 & 56.51 & 402.14 & 77.75 & 496.11 \\
     &  & \texttt{SUB + DYN} & 69.37 & 0.08 & 56.51 & 246.90 & 78.36 & 246.52 \\
     & \multirow{2}{*}{0.01} & \texttt{SUB} & 72.62 & 0.14 & 61.14 & 354.70 & 81.79 & 348.52 \\
     &  & \texttt{SUB + DYN} & 72.63 & 0.11 & 61.13 & 252.13 & 81.82 & 255.85 \\
    \midrule
    \multirow{12}{*}{\texttt{K2-CEDF}} & \multirow{2}{*}{0.5} & \texttt{SUB} & \textbf{56.48} & 0.59 & \textbf{46.35} & 343.96 & \textbf{67.25} & 339.72 \\
     &  & \texttt{SUB + DYN} & 56.50 & 0.32 & 46.36 & 191.72 & 67.26 & 247.64 \\
     & \multirow{2}{*}{0.2} & \texttt{SUB} & 56.61 & 0.42 & 46.50 & 311.99 & 67.40 & 328.50 \\
     &  & \texttt{SUB + DYN} & 56.61 & 0.26 & 46.53 & 192.16 & 68.62 & 227.28 \\
     & \multirow{2}{*}{0.1} & \texttt{SUB} & 56.77 & 0.33 & 46.81 & 287.60 & 67.64 & 302.05 \\
     &  & \texttt{SUB + DYN} & 56.79 & 0.23 & 46.81 & 209.21 & 68.54 & 208.43 \\
     & \multirow{2}{*}{0.05} & \texttt{SUB} & 57.14 & 0.25 & 47.39 & 266.30 & 68.13 & 280.96 \\
     &  & \texttt{SUB + DYN} & 57.14 & 0.21 & 47.40 & 206.34 & 68.14 & 227.21 \\
     & \multirow{2}{*}{0.02} & \texttt{SUB} & 58.20 & 0.17 & 49.05 & 237.92 & 69.60 & 246.97 \\
     &  & \texttt{SUB + DYN} & 58.20 & 0.17 & 49.06 & 205.64 & 69.61 & 228.99 \\
     & \multirow{2}{*}{0.01} & \texttt{SUB} & 59.96 & 0.20 & 51.82 & 210.20 & 71.91 & 223.81 \\
     &  & \texttt{SUB + DYN} & 59.96 & 0.16 & 51.83 & 195.50 & 72.27 & 210.54 \\
    \bottomrule
  \end{tabular}
\end{table}

\begin{table}[htbp]
    \centering
    \caption{Results for Matrix-matrix sum for \texttt{0007}, \texttt{3868}, \texttt{0008}, \texttt{2670}, \texttt{0010} and \texttt{3936} from the Database dataset, reporting the peak memory usage in bits per nonzero element in the matrix and running time in minutes. Column $M_1\times M_2$ means that the result is the multiplication of $M_1$ and $M_2$. The number in the second column after \texttt{K2-EDF-1} and \texttt{K2-CEDF} represents the different values of $m_\tau$, meaning that the respective solution will compute skip values for subtrees larger than $m_\tau\cdot \sqrt{N}$, where $N$ is the number of nodes in the respective $k^2$-tree. For the purpose of readability, \texttt{K2-BP} and \texttt{K2-CBP} were excluded because they were not competitive at all in the Database dataset. The matrices are ordered as in Table~\ref{tab:info-database}.\label{tab:database_result_mul_2}}
    \begin{tabular}{l l l r r r r r r}
    
    \toprule
    Solution &  &  & \multicolumn{2}{c}{0007$\times$3868} & \multicolumn{2}{c}{0008$\times$2670} & \multicolumn{2}{c}{0012$\times$3936} \\
    \cmidrule(lr){4-5} \cmidrule(lr){6-7} \cmidrule(lr){8-9}
     &  &  & \textit{mem} & \textit{time} & \textit{mem} & \textit{time} & \textit{mem} & \textit{time} \\
    \midrule
    \texttt{K2-TREE} &  &  & 28.04 & \textbf{0.23} & 127.54 & \textbf{87.28} & 28.44 & \textbf{0.04} \\
    \midrule
    \texttt{K2-DFUDS} &  &  & 34.30 & 0.55 & 152.40 & 117.86 & 34.87 & 0.06 \\
    \midrule
    \multirow{12}{*}{\texttt{K2-EDF-1}} & \multirow{2}{*}{0.5} & \texttt{SUB} & 26.48 & 14.26 & 120.25 & 1523.39 & 26.90 & 1.02 \\
     &  & \texttt{SUB + DYN} & 26.50 & 7.05 & 120.28 & 514.18 & 26.91 & 0.42 \\
     & \multirow{2}{*}{0.2} & \texttt{SUB} & 26.71 & 6.72 & 120.79 & 1294.43 & 27.10 & 0.70 \\
     &  & \texttt{SUB + DYN} & 26.71 & 3.48 & 120.82 & 530.61 & 27.10 & 0.32 \\
     & \multirow{2}{*}{0.1} & \texttt{SUB} & 27.02 & 3.82 & 121.69 & 1063.61 & 27.48 & 0.53 \\
     &  & \texttt{SUB + DYN} & 27.02 & 2.07 & 121.71 & 476.86 & 27.49 & 0.28 \\
     & \multirow{2}{*}{0.05} & \texttt{SUB} & 27.68 & 2.24 & 123.50 & 836.30 & 28.16 & 0.39 \\
     &  & \texttt{SUB + DYN} & 27.69 & 1.29 & 123.51 & 603.15 & 28.17 & 0.23 \\
     & \multirow{2}{*}{0.02} & \texttt{SUB} & 29.53 & 1.28 & 128.83 & 634.74 & 30.05 & 0.26 \\
     &  & \texttt{SUB + DYN} & 29.54 & 1.47 & 128.83 & 399.75 & 30.04 & 0.18 \\
     & \multirow{2}{*}{0.01} & \texttt{SUB} & 32.67 & 1.19 & 137.50 & 536.66 & 32.69 & 0.31 \\
     &  & \texttt{SUB + DYN} & 32.69 & 0.95 & 137.50 & 353.67 & 32.69 & 0.26 \\
    \midrule
    \multirow{12}{*}{\texttt{K2-CEDF}} & \multirow{2}{*}{0.5} & \texttt{SUB} & \textbf{20.93} & 6.60 & \textbf{110.97} & 692.09 & \textbf{24.59} & 0.73 \\
     &  & \texttt{SUB + DYN} & 20.96 & 3.59 & 111.00 & 393.69 & 24.60 & 0.34 \\
     & \multirow{2}{*}{0.2} & \texttt{SUB} & 21.02 & 3.47 & 111.30 & 603.76 & 24.72 & 0.51 \\
     &  & \texttt{SUB + DYN} & 21.05 & 2.15 & 111.30 & 347.49 & 24.71 & 0.28 \\
     & \multirow{2}{*}{0.1} & \texttt{SUB} & 21.22 & 2.18 & 111.83 & 596.42 & 24.96 & 0.40 \\
     &  & \texttt{SUB + DYN} & 21.23 & 1.51 & 111.84 & 416.19 & 24.98 & 0.25 \\
     & \multirow{2}{*}{0.05} & \texttt{SUB} & 21.55 & 1.74 & 112.91 & 498.37 & 25.44 & 0.31 \\
     &  & \texttt{SUB + DYN} & 21.54 & 1.11 & 112.92 & 360.87 & 25.43 & 0.22 \\
     & \multirow{2}{*}{0.02} & \texttt{SUB} & 22.60 & 1.06 & 116.12 & 422.02 & 26.80 & 0.22 \\
     &  & \texttt{SUB + DYN} & 22.59 & 0.92 & 116.13 & 480.60 & 26.79 & 0.18 \\
     & \multirow{2}{*}{0.01} & \texttt{SUB} & 24.31 & 1.14 & 121.39 & 374.26 & 28.77 & 0.26 \\
     &  & \texttt{SUB + DYN} & 24.31 & 0.82 & 121.38 & 391.81 & 28.76 & 0.22 \\
    \bottomrule
  \end{tabular}
\end{table}

\begin{table}[htbp]
    \centering
    \caption{Results for Matrix-matrix sum for \texttt{0159}, \texttt{4660}, \texttt{0606}, \texttt{3867}, \texttt{1619} and \texttt{3935} from the Database dataset, reporting the peak memory usage in bits per nonzero element in the matrix and running time in minutes. Column $M_1\times M_2$ means that the result is the multiplication of $M_1$ and $M_2$. The number in the second column after \texttt{K2-EDF-1} and \texttt{K2-CEDF} represents the different values of $m_\tau$, meaning that the respective solution will compute skip values for subtrees larger than $m_\tau\cdot \sqrt{N}$, where $N$ is the number of nodes in the respective $k^2$-tree. For the purpose of readability, \texttt{K2-BP} and \texttt{K2-CBP} were excluded because they were not competitive at all in the Database dataset. The matrices are ordered as in Table~\ref{tab:info-database}.\label{tab:database_result_mul_3}}
    \begin{tabular}{l l l r r r r r r}
    \toprule
    Solution &  &  & \multicolumn{2}{c}{\texttt{0159}$\times$\texttt{4660}} & \multicolumn{2}{c}{\texttt{0606}$\times$\texttt{3867}} & \multicolumn{2}{c}{\texttt{1619}$\times$\texttt{3935}} \\
    \cmidrule(lr){4-5} \cmidrule(lr){6-7} \cmidrule(lr){8-9}
     &  &  & \textit{mem} & \textit{time} & \textit{mem} & \textit{time} & \textit{mem} & \textit{time} \\
    \midrule
    \texttt{K2-TREE} &  &  & 102.06 & \textbf{0.00} & 77.91 & \textbf{0.17} & 77.48 & \textbf{9.51} \\
    \midrule
    \texttt{K2-DFUDS} &  &  & 126.86 & \textbf{0.00} & 95.19 & 0.24 & 94.60 & 12.54 \\
    \midrule
    \multirow{12}{*}{\texttt{K2-EDF-1}} & \multirow{2}{*}{0.5} & \texttt{SUB} & 96.63 & 0.08 & 73.64 & 3.40 & 73.15 & 130.96 \\
     &  & \texttt{SUB + DYN} & 96.59 & 0.04 & 73.57 & 1.30 & 73.21 & 52.47 \\
     & \multirow{2}{*}{0.2} & \texttt{SUB} & 97.42 & 0.05 & 74.20 & 2.58 & 73.74 & 107.92 \\
     &  & \texttt{SUB + DYN} & 97.46 & 0.03 & 74.22 & 1.14 & 73.80 & 48.67 \\
     & \multirow{2}{*}{0.1} & \texttt{SUB} & 98.93 & 0.03 & 75.18 & 2.04 & 74.75 & 91.83 \\
     &  & \texttt{SUB + DYN} & 98.92 & 0.02 & 75.17 & 1.00 & 74.77 & 45.64 \\
     & \multirow{2}{*}{0.05} & \texttt{SUB} & 101.85 & 0.02 & 77.06 & 1.64 & 76.69 & 76.47 \\
     &  & \texttt{SUB + DYN} & 101.85 & 0.02 & 77.07 & 0.90 & 76.86 & 43.36 \\
     & \multirow{2}{*}{0.02} & \texttt{SUB} & 109.60 & 0.02 & 82.77 & 1.23 & 82.57 & 57.28 \\
     &  & \texttt{SUB + DYN} & 109.57 & 0.02 & 82.78 & 0.80 & 82.56 & 38.67 \\
     & \multirow{2}{*}{0.01} & \texttt{SUB} & 122.12 & 0.01 & 91.98 & 1.50 & 91.70 & 46.04 \\
     &  & \texttt{SUB + DYN} & 122.12 & 0.01 & 92.00 & 1.21 & 91.71 & 35.21 \\
    \midrule
    \multirow{12}{*}{\texttt{K2-CEDF}} & \multirow{2}{*}{0.5} & \texttt{SUB} & \textbf{90.32} & 0.05 & \textbf{69.34} & 1.66 & 71.56 & 108.45 \\
     &  & \texttt{SUB + DYN} & \textbf{90.32} & 0.04 & 69.36 & 0.86 & \textbf{71.55} & 47.42 \\
     & \multirow{2}{*}{0.2} & \texttt{SUB} & 90.73 & 0.04 & 69.68 & 1.30 & 71.93 & 91.23 \\
     &  & \texttt{SUB + DYN} & 90.73 & 0.03 & 69.65 & 0.78 & 71.88 & 47.57 \\
     & \multirow{2}{*}{0.1} & \texttt{SUB} & 91.77 & 0.03 & 70.29 & 1.14 & 72.61 & 78.59 \\
     &  & \texttt{SUB + DYN} & 91.76 & 0.03 & 70.30 & 0.74 & 72.65 & 46.14 \\
     & \multirow{2}{*}{0.05} & \texttt{SUB} & 93.65 & 0.03 & 71.47 & 0.99 & 73.95 & 68.78 \\
     &  & \texttt{SUB + DYN} & 93.65 & 0.02 & 71.43 & 0.68 & 73.96 & 44.29 \\
     & \multirow{2}{*}{0.02} & \texttt{SUB} & 98.86 & 0.03 & 74.97 & 1.20 & 77.91 & 55.51 \\
     &  & \texttt{SUB + DYN} & 98.94 & 0.03 & 74.96 & 0.62 & 77.92 & 40.63 \\
     & \multirow{2}{*}{0.01} & \texttt{SUB} & 107.03 & 0.03 & 80.74 & 1.11 & 84.35 & 44.54 \\
     &  & \texttt{SUB + DYN} & 107.05 & 0.03 & 80.78 & 1.02 & 84.38 & 37.35 \\
    \bottomrule
  \end{tabular}
\end{table}

We now continue with the results of the Database dataset, shown in Tables~\ref{tab:database_result_mul_1},~\ref{tab:database_result_mul_2} and~\ref{tab:database_result_mul_3}. As in Section~\ref{sec:mm-sum} for Matrix-matrix sum, these results reinforce the conclusions drawn from the WebGraph dataset. First, that table shows that the space usage of \texttt{K2-TREE} and the \texttt{K2-EDF-1} variants is similar, with slightly better efficiency for \texttt{K2-EDF-1} thanks to small overhead of skip values. The gap with \texttt{K2-CEDF} is much larger, implying that these matrices contain few or no all-$1$s submatrices, preventing \texttt{K2-EDF-1} from exploiting its compression.

In general, \texttt{K2-CEDF} can be considerably faster than \texttt{K2-EDF-1} on these datasets. This can be explained by the high sparsity and lack of structural correlation in the matrices: many multiplications result in subtree skips (i.e., one operand is all $0$s and the other is not). Since \texttt{K2-CEDF} achieves stronger compression, it requires fewer bits to be read than \texttt{K2-EDF-1}, resulting in fewer operations. This trend is also visible across the different values of $m_\tau$: smaller values produce more skip values and therefore enable faster subtree skipping, whereas larger $m_\tau$ forces more complete traversals without corresponding benefit.

\texttt{K2-DFUDS} is faster than both \texttt{K2-EDF-1} and \texttt{K2-CEDF} here because its range-min-max tree supports constant-time subtree skipping, giving it a significant advantage. However, \texttt{K2-TREE} outperforms \texttt{K2-DFUDS}, which can be justified by the extreme sparsity of these matrices: in such cases, constant-time rank operations are slightly more efficient than the range-min-max tree mechanism.

In terms of peak memory usage, for larger values of $m_\tau$, \texttt{K2-EDF-1} uses less memory than \texttt{K2-TREE} and \texttt{K2-DFUDS}; for smaller values of $m_\tau$, the skip values become substantial, increasing space usage. In contrast, \texttt{K2-CEDF} maintains consistently lower memory usage, except when $m_\tau = 0.01$. This can be attributed to its significantly more compact representation on disk.

\begin{table}[htbp]
    \centering
    \caption{Results for Matrix-matrix multiplication in the Random dataset, showing average running time in seconds. The number in the second column after \texttt{K2-EDF-1} and \texttt{K2-CEDF} represents the different values of $m_\tau$, meaning that the respective solution will compute skip values for subtrees larger than $m_\tau\cdot \sqrt{N}$, where $N$ is the number of nodes in the respective $k^2$-tree. The matrices are ordered by density from left to right.\label{tab:random_result_mul}}
  \begin{tabular}{l l l r r r r r}
    \toprule
    Solution &  &  & $10^{-4}$ & $10^{-3}$ & $10^{-2}$ & $10^{-1}$ & $2\cdot 10^{-1}$  \\
    \midrule
    \texttt{K2-TREE} &  &  & 0.0046 & \textbf{0.0123} & 0.2910 & 7.6887 & 17.2783 \\
    \midrule
    \texttt{K2-DFUDS} &  &  & 0.0144 & 0.0225 & 0.2694 & 5.2615 & 9.4470 \\
    \midrule
    \multirow{12}{*}{\texttt{K2-EDF-1}} & \multirow{2}{*}{0.5} & \texttt{SUB} & 0.0055 & 0.0240 & 0.4891 & 5.2221 & 6.2303 \\
     &  & \texttt{SUB + DYN} & 0.0056 & 0.0234 & 0.4268 & 4.3406 & 5.6617 \\
     & \multirow{2}{*}{0.2} & \texttt{SUB} & 0.0038 & 0.0134 & 0.4141 & 5.0345 & 5.7232 \\
     &  & \texttt{SUB + DYN} & \textbf{0.0035} & 0.0155 & 0.3901 & 4.4358 & \textbf{4.9355} \\
     & \multirow{2}{*}{0.1} & \texttt{SUB} & 0.0047 & 0.0187 & 0.4195 & 4.8169 & 6.1894 \\
     &  & \texttt{SUB + DYN} & 0.0047 & 0.0191 & 0.4476 & 4.5542 & 5.9766 \\
     & \multirow{2}{*}{0.05} & \texttt{SUB} & 0.0052 & 0.0188 & \textbf{0.2472} & 2.8633 & 6.1243 \\
     &  & \texttt{SUB + DYN} & 0.0051 & 0.0190 & 0.2600 & \textbf{2.8450} & 6.0564 \\
     & \multirow{2}{*}{0.02} & \texttt{SUB} & 0.0049 & 0.0190 & 0.3922 & 4.1838 & 5.8858 \\
     &  & \texttt{SUB + DYN} & 0.0050 & 0.0195 & 0.4092 & 4.3645 & 5.9162 \\
     & \multirow{2}{*}{0.01} & \texttt{SUB} & 0.0054 & 0.0192 & 0.3838 & 4.1888 & 5.9381 \\
     &  & \texttt{SUB + DYN} & 0.0055 & 0.0200 & 0.4074 & 4.2299 & 5.6508 \\
    \midrule
    \multirow{12}{*}{\texttt{K2-CEDF}} & \multirow{2}{*}{0.5} & \texttt{SUB} & 0.0056 & 0.0251 & 0.4961 & 5.2330 & 6.3415 \\
     &  & \texttt{SUB + DYN} & 0.0057 & 0.0244 & 0.4453 & 4.3705 & 5.5591 \\
     & \multirow{2}{*}{0.2} & \texttt{SUB} & 0.0036 & 0.0140 & 0.3890 & 5.0976 & 6.0750 \\
     &  & \texttt{SUB + DYN} & 0.0036 & 0.0144 & 0.4017 & 4.4412 & 5.4043 \\
     & \multirow{2}{*}{0.1} & \texttt{SUB} & 0.0048 & 0.0198 & 0.4484 & 4.8162 & 6.6027 \\
     &  & \texttt{SUB + DYN} & 0.0049 & 0.0203 & 0.4637 & 4.5821 & 6.5799 \\
     & \multirow{2}{*}{0.05} & \texttt{SUB} & 0.0052 & 0.0199 & 0.2822 & 2.8786 & 5.7706 \\
     &  & \texttt{SUB + DYN} & 0.0051 & 0.0202 & 0.2843 & 2.8610 & 5.6955 \\
     & \multirow{2}{*}{0.02} & \texttt{SUB} & 0.0067 & 0.0202 & 0.4045 & 4.3603 & 5.5577 \\
     &  & \texttt{SUB + DYN} & 0.0050 & 0.0207 & 0.4245 & 4.3967 & 5.5214 \\
     & \multirow{2}{*}{0.01} & \texttt{SUB} & 0.0054 & 0.0203 & 0.3962 & 4.0780 & 6.0106 \\
     &  & \texttt{SUB + DYN} & 0.0055 & 0.0213 & 0.4257 & 4.2799 & 5.9662 \\
    \midrule
    \texttt{K2-BP} &  &  & 0.0060 & 0.0867 & 2.1080 & 22.9008 & 27.4495 \\
    \midrule
    \texttt{K2-CBP} &  &  & 0.0152 & 0.2173 & 4.0551 & 40.7572 & 38.3270 \\
    \bottomrule
  \end{tabular}
\end{table}

Finally, the Random dataset (\Cref{tab:random_result_mul}) confirms the previous observations. Both \texttt{K2-BP} and \texttt{K2-CBP} are slower than \texttt{K2-TREE}. When compared to \texttt{K2-EDF-1} and \texttt{K2-CEDF}, \texttt{K2-BP} can be faster on very sparse matrices for small $m_\tau$ values, since dynamic skip computation is inexpensive for small matrices. However, for denser matrices, it becomes significantly slower. \texttt{K2-TREE} is faster than \texttt{K2-DFUDS}, \texttt{K2-EDF-1}, and \texttt{K2-CEDF} for sparse matrices, consistent with the Database dataset; but the trend reverses on dense matrices. For dense matrices, \texttt{K2-EDF-1} and \texttt{K2-CEDF} outperform the other solutions because subtree skipping becomes rare, forcing full traversals, which favor these two representations.

To summarize the results of the matrix–matrix multiplication, we observe that \texttt{K2-BP} and \texttt{K2-CBP} are consistently the slowest solutions due to their larger bitmaps and costly subtree skipping, while also exhibiting the highest peak memory usage. \texttt{K2-DFUDS} is generally the fastest approach in the WebGraph dataset thanks to constant-time subtree skipping, although \texttt{K2-EDF-1} becomes competitive or even faster in matrices where large subtrees can be skipped, especially when using dynamic skip computation. \texttt{K2-TREE} performs well on sparser matrices due to efficient rank operations, but loses its advantage on denser ones, where \texttt{K2-EDF-1} and \texttt{K2-CEDF} benefit from full traversals and reduced memory traffic. In terms of peak memory usage, \texttt{K2-EDF-1} and \texttt{K2-CEDF} are the most space-efficient, although very small $m_\tau$ values increase the overhead of skip information. Finally, across all datasets, the behavior observed in the Random dataset reaffirms these trends: sparsity favors \texttt{K2-TREE}, while density favors the EDF-based approaches.

% ------------------------------------------
\section{Conclusion and Future Work}
\label{sec:conclusion}
% ------------------------------------------

In this paper, we introduced four depth-first representations of $k^2$-trees: \texttt{K2-EDF-1} based on a plain depth-first representation and \texttt{K2-BP} based on a balanced parenthesis representation; and their respective subtree-compressed variants, namely, \texttt{K2-CEDF} and \texttt{K2-CBP}. We also proposed a linear-time algorithm, based on Suffix and LCP arrays, to efficiently detect and compress repeated subtrees.

We experimentally compared our four approaches against two established baselines: \texttt{K2-TREE}, which uses a level-wise traversal~\cite{k2-tree_vldbj}, and \texttt{K2-DFUDS}, which relies on a depth-first unary degree sequence representation~\cite{cache-friendly-boolean}. The comparison was based on three sets of binary matrices that differ in type,
size, and sparsity (WebGraph, Database, and Random) and measured disk space usage, execution time and peak memory usage (PMU) of three classical matrix operations: matrix-vector multiplication, matrix-matrix sum, and matrix-matrix multiplication. The experimental results showed that our (plain and compressed) depth-first representations are competitive and, in many cases, advantageous depending on the workload and matrix structure. In particular, our variants, \texttt{K2-BP} and \texttt{K2-CBP}, exploit the high number of repeated subtrees found in large, moderately sparse matrices, such as the largest WebGraph datasets (e.g., \texttt{ID-2005}, \texttt{AR-2005}, and \texttt{UK-2005}). 

Our implementation \texttt{K2-CEDF} consistently achieves the best disk space usage across almost all datasets. In the few cases where it does not, \texttt{K2-EDF-1} shows that skip values introduce only a small overhead compared to the rank structure used in \texttt{K2-TREE} and the range min-max tree used in \texttt{K2-DFUDS}.

In terms of peak memory consumption, our depth-first representations \texttt{K2-EDF-1} and \texttt{K2-CEDF} consistently use the least memory in every operation. This confirms that depth-first layouts can effectively exploit structural regularities in the tree to reduce memory overhead during processing. 

In terms of performance, it varies depending on both the dataset and the operation:

\begin{description}
    \item[Matrix-vector multiplication.] \texttt{K2-EDF-1} is usually the fastest. Its compact size reduces I/O costs, and it does not require additional traversal structures, unlike \texttt{K2-TREE}.

\item[Matrix-matrix sum.] \texttt{K2-DFUDS} generally performs the best. Its constant-time subtree traversal, combined with a depth-first layout, provides an advantage over \texttt{K2-TREE}.

\item[Matrix-matrix multiplication.] No single method dominates across all cases:
\begin{itemize}
    \item \texttt{K2-TREE} performs better on large and very sparse matrices due to efficient navigation.
    \item \texttt{K2-DFUDS} can outperform \texttt{K2-TREE} when the operation requires the latter to perform many rank calls to skip a subtree. This means that when multiplying two matrices, the divide-and-conquer algorithm performs just a few recursive steps where one submatrix is all zeros and the other is not.
    \item Our two best representations \texttt{K2-EDF-1} and \texttt{K2-CEDF} can be faster when the operation forces the \texttt{K2-TREE} to do many rank calls because subtree skips are frequent, or when the algorithm needs to fully traverse the tree instead of skipping subtrees. This makes them more suitable for denser matrices. Moreover, comparing these two proposals (i.e., \texttt{K2-EDF-1} and \texttt{K2-CEDF}), we observe a clear trade-off between time and space efficiency: \texttt{K2-EDF-1} favors faster query performance, while \texttt{K2-CEDF} achieves stronger compression at the cost of slightly higher execution time.
    \item As far as our two other proposals are concerned, i.e. \texttt{K2-BP} and \texttt{K2-CBP}, the former needs nearly $2$x bits compared to \texttt{K2-TREE} and \texttt{K2-EDF-1}, and the latter achieves a lower disk space usage in some cases but their execution time is generally worse than all the other approaches, except in specific cases (e.g., \texttt{K2-BP} performs better than \texttt{K2-TREE} in matrix-matrix multiplication for \texttt{ID-2004}).
\end{itemize}

\end{description}

As future work, we foresee the following research directions about our novel depth-first representations:

\begin{itemize}
    \item Study how subtree repetition can help optimize matrix operations. If there are two identical subtrees, it can mean that they have identical submatrices, so one could avoid repeating some operations that have already been executed.
    \item Implement other depth-first representations like the ultra succinct one by Jansson et al.~\cite{jansson2007ultra}; 
    \item Adapt \texttt{K2-DFUDS} in such a way that subtree compression is applied to it. As it is a depth-first layout, subtree compression is doable.
    \item Implement a parallel version of the depth-first representations shown in this paper and compare it against the parallel \texttt{K2-TREE} implementation of Arroyuelo et al.~\cite{k2-tree_vldbj}.
    \item Study how the depth-first representations showed in this paper can affect energy consumption on matrix operations, due to better use of cache access.
\end{itemize}

% ------------------------------------------
\section*{Acknowledgements}
\label{sec:ack}
% ------------------------------------------

Paolo Ferragina and Francesco Tosoni were funded by the Alfred P.~Sloan Foundation with the grant \#\href{https://sloan.org/grant-detail/g-2025-25193}{G-2025-25193} (\url{sloan.org}).

\bibliography{sn-bibliography}% common bib file

\end{document}